\DeclareMathSymbol{\N}{\mathbin}{AMSb}{"4E}
\DeclareMathSymbol{\Z}{\mathbin}{AMSb}{"5A}
\DeclareMathSymbol{\R}{\mathord}{AMSb}{"52}
\newtheorem{thm}{Theorem}[section]
\newtheorem{cor}[thm]{Corollary}
\newtheorem{prop}[thm]{Proposition}
\newcommand{\reffig}[1]{Figure~\ref{#1}}
\newcommand{\refsec}[1]{Section~\ref{#1}}
\newcommand{\refcor}[1]{Corollary~\ref{#1}}
\newcommand{\refthm}[1]{Theorem~\ref{#1}}
\newcommand{\refalg}[1]{Algorithm~\ref{#1}}
\newcommand{\refprop}[1]{Prop.~\ref{#1}}
\newcommand{\refeqn}[1]{Eqn. (\ref{#1})}
\newcommand{\ignore}[1]{}
\definecolor{RedLetter}{rgb}{0.63,0.165,0.163}
\definecolor{GreenLetter}{rgb}{0.165,0.63,0.163}
\newcommand{\editor}[1]{}
\newcommand{\xynode}[1]{*+[o][F-]{#1}}
\newcommand{\proj}{\_}
\begin{document}
\pagestyle{empty}
%
\conferenceinfo{ICDT}{'14 Athens, Greece}
\CopyrightYear{2014} 

\title{ Leapfrog Triejoin: A Simple, Worst-Case Optimal Join Algorithm}
%

\numberofauthors{1}
\author{
%
%
\alignauthor
Todd L. Veldhuizen\\
   \affaddr{LogicBlox Inc.}\\
   \affaddr{Two Midtown Plaza}\\
   \affaddr{1349 West Peachtree Street NW}\\
   \affaddr{Suite 1880, Atlanta GA 30309}\\
   \email{\small {\sf tveldhui@\{logicblox.com,acm.org\}}}
}
\maketitle
\begin{abstract}
Recent years have seen exciting developments in join algorithms.
In 2008, Atserias, Grohe and Marx (henceforth AGM) proved a tight
bound on the maximum result size of a full conjunctive query, given
constraints on the input relation sizes.   In 2012, Ngo, Porat,
R{\'e} and Rudra (henceforth NPRR) devised a join algorithm with
worst-case running time proportional to the AGM bound \cite{Ngo:PODS:2012}.
Our commercial database system LogicBlox employs a novel join
algorithm, \emph{leapfrog triejoin}, which compared conspicuously
well to the NPRR algorithm in preliminary benchmarks.  This spurred
us to analyze the complexity of leapfrog triejoin.  In this paper
we establish that leapfrog triejoin is also worst-case optimal, up
to a log factor, in the sense of NPRR.  We improve on the results
of NPRR by proving that leapfrog triejoin achieves worst-case
optimality for finer-grained classes of database instances, such
as those defined by constraints on projection cardinalities.  We
show that NPRR is \emph{not} worst-case optimal for such classes,
giving a counterexample where leapfrog triejoin runs in $O(n \log n)$
time and NPRR runs in $\Theta(n^{1.375})$ time.  On a
practical note, leapfrog triejoin can be implemented using conventional
data structures such as B-trees, and extends naturally to 
$\exists_1$ queries.   We believe our algorithm offers a useful addition to the
existing toolbox of join algorithms, being easy to absorb, simple
to implement, and having a concise optimality proof.

\end{abstract}


\terms{Algorithms,Theory}


\newcommand{\TJ}[1]{\editor{#1}}

\hyphenation{leap-frog}
\hyphenation{trie-join}

\section{Introduction}

Join processing is a fundamental and comprehensively-studied problem 
in database systems.  Many useful queries can be formulated as
one or more \emph{full conjunctive queries}.
A full conjunctive query is a conjunctive query
with no projections, i.e., every variable in the body appears
in the head \cite{Chandra:STOC:1977,Abiteboul:1995}.
As a running example we use the query defined by this Datalog rule:
\begin{align}
\label{e:RST}
Q(a,b,c) &\leftarrow R(a,b),S(b,c),T(a,c).
\end{align}
where $a,b,c$ are query variables (for intuition: if $R=S=T$, then $Q$ finds triangles.)

Given constraints on the sizes of the input relations
such as $|R| \leq n$, $|S| \leq n$, $|T| \leq n$,
what is the maximum possible query result size $|Q|$?
This question has practical import, since a tight
bound $|Q| \leq f(n)$ implies an
$\Omega(f(n))$ worst-case running time for algorithms
answering such queries.

Atserias, Grohe and Marx (AGM \cite{Atserias:FOCS:2008})
established a tight bound on the size of $Q$: the
\emph{fractional edge cover} bound $Q^\ast$
(\refsec{s:fractionalcover}).  For the
case where $|R|=|S|=|T|=n$, the fractional
cover bound yields $|Q| \leq Q^\ast = n^{3/2}$.
In earlier work, Grohe and Marx \cite{Grohe:SODA:2006}
gave an algorithm with running time $O(|Q^\ast|^2 g(n))$,
where $g(n)$ is a polynomial determined by the fractional
cover bound.
In 2012, 
Ngo, Porat, R{\'e} and Rudra (NPRR \cite{Ngo:PODS:2012})
devised a groundbreaking algorithm with
worst-case running time $O(Q^\ast)$, matching the AGM bound.
The algorithm is non-trivial, and its implementation and analysis depend on
rather deep machinery developed in the paper.

The NPRR algorithm was brought to our attention by Dung Nguyen,
who implemented it experimentally using our framework.
LogicBlox uses a novel
 and hitherto proprietary
join algorithm we call \emph{leapfrog triejoin}.
Preliminary benchmarks
suggested that leapfrog triejoin
performed dramatically better than NPRR on some
test problems~\cite{Nguyen:private:2012}.
These benchmark results motivated us to
analyze our algorithm, in light of the breakthroughs of NPRR.

Conventional join implementations employ
a stable of join operators (see e.g. \cite{Graefe:CSUR:1993}) which are
composed in a tree to produce the query result; this tree is
prescribed by a query plan produced by the optimizer.
The query plan often relies on producing intermediate
results.
In contrast, leapfrog triejoin joins all input
relations simultaneously without producing any
intermediate results.\footnote{In some situations it
is \emph{desirable} to materialize intermediate results.  Such materializations
are compatible with leapfrog triejoin, but are not
required to meet the worst-case performance bound, and are beyond
the scope of this paper.}  Our algorithm is variable-oriented:
for a join $Q(x_1,\ldots,x_k)$, leapfrog triejoin performs a
backtracking search, binding each variable $x_1,x_2,\ldots$
in turn to enumerate satisfying assignments of the formula defining the query.
This is in contrast to typical DBMS algorithms which are
join-oriented, using a composition of algebraic joins in a
specified order to produce the result.  Leapfrog triejoin
is substantially different than typical join algorithms,
but natural in retrospect.

In this paper we show that leapfrog triejoin
achieves running time $O(Q^\ast \log n)$, where
$Q^\ast$ is the fractional cover bound, and 
$n$ is the largest cardinality among relations of the join.
(A variant suggested by Ken Ross
eliminates the $\log n$ factor, achieving $O(Q^\ast)$ time
(\refsec{s:hashvariant}).)

We believe that leapfrog triejoin offers a useful addition to the existing
toolbox of join algorithms.  The algorithm is easy to understand
and simple to implement.
The optimality proof is concise, and
could be taught in an advanced undergraduate course. 
The optimality principle strengthens and improves that of NPRR,
and leapfrog triejoin is asymptotically faster than NPRR for
useful classes of problems.  Finally, leapfrog
triejoin is a well-tested, practical algorithm, serving as the workhorse
of our commercial database system.

The paper is organized as follows.  
In \refsec{s:background} we review the fractional edge cover
bound.  \refsec{s:entireleapfrogtriejoin} presents the
leapfrog triejoin algorithm.
\refsec{s:complexity} develops the tools used in
the complexity analysis, culminating in the
optimality proof of \refthm{thm:lftj}.
In \refsec{s:finer} we consider finer-grained
complexity classes for which leapfrog triejoin
is optimal; in one such example we demonstrate
that the NPRR algorithm has running time
$\Theta(n^{1.375})$, compared to $O(n \log n)$
for leapfrog triejoin.
In \refsec{s:extensions} we describe the
extension of leapfrog triejoin to $\exists_1$
queries.  In \refsec{s:hashvariant} we discuss
a variant of leapfrog triejoin
which eliminates the $\log n$ factor.

\section{Preliminaries and background}

\label{s:background}

\subsection{Notations and conventions}

All logarithms are base 2, and $[n] = \{ 1, \ldots, n \}$.
Complexity analyses assume the RAM machine model.

Database instances are finite structures defined
over universes which are subsets of $\N$.
In algorithm descriptions we use $\mathsf{int}$ as a synonym for $\N$.
(The restriction to $\N$ is merely to simplify the presentation;
our implementation requires only that a type be totally ordered.)

For a binary relation $R(a,b)$,
we write $R(a,\proj)$ for the projection $\pi_1(R)$, i.e., the
set $\{ a ~:~ \exists b ~.~ (a,b) \in R \}$.
For a parameter $a$, we write $R_a(b)$ for the \emph{curried}
version of $R$, i.e., the relation $\{ b ~:~ (a,b) \in R \}$.
Similarly for relations of arity $> 2$, e.g., for $S(a,b,c)$
we write $S_{a}(b,c)$ and $S_{a,b}(c)$ for curried versions.
We assume set semantics: query results are sets rather
than multisets; our datalog system implements set semantics,
unlike commercial SQL systems, so this is not merely a simplifying
convenience.

\subsection{The fractional cover bound}

\label{s:fractionalcover}

We begin with a review of the fractional edge cover bound
for worst-case result size
of full conjunctive queries.  The fractional edge cover bound
is not directly required by the complexity analysis for
leapfrog triejoin (\refthm{thm:lftj}), which is formulated in terms of
the maximum query result size $Q^\ast$.  However, for families
of problem instances defined by cardinality constraints on
the relation sizes, $Q^\ast$ can be computed using the
fractional cover bound.

The fractional edge cover was proven to be an upper bound
on query result size by
Grohe and Marx \cite{Grohe:SODA:2006} in the context of
constraint solving.  The bound was shown to be tight,
and adapted to relational joins, by Atserias, Grohe and
Marx \cite{Atserias:FOCS:2008}.

We continue the running example of a query $Q(a,b,c)$ defined
by the join $R(a,b)$, $S(b,c)$, $T(a,c)$.
Suppose we know the sizes $|R|$, $|S|$, and $|T|$,
and we wish to know the largest possible query result size $|Q|$.
The AGM bound for $|Q|$ is obtained by constructing a hypergraph 
$H=(V,\mathcal{E})$ whose
vertices are the variables $V=\{a,b,c\}$, and each atom
such as $R(a,b)$ is interpreted as a (hyper)edge $\{a,b\}$
on the variables appearing in its argument list:
\begin{align*}
\xymatrix @=1cm {
& \xynode{a} \ar@{-}[rd]^{T} \ar@{-}[dl]_{R} \\
\xynode{b} \ar@{-}[rr]_{S} & & \xynode{c}
}
\end{align*}
Recall that an edge cover is a subset $C \subseteq \mathcal{E}$
of edges such that each vertex appears in at least one edge $e \in C$.
Edge cover can be formulated as an integer
programming problem by assigning to each edge $e_i \in \mathcal{E}$
a weight $\lambda_i$, with $\lambda_i=1$ when $e_i \in C$
and $\lambda_i=0$ when $e_i \not\in C$.
The cover requirement is enacted by inequalities, one for
each vertex.  For the query (\ref{e:RST}) we would
use edge weights $\lambda_R$, $\lambda_S$, and $\lambda_T$,
and inequalities:
\begin{align}
\label{e:inequalities}
\begin{array}{cccccccc}
a: & \lambda_R & + & & & \lambda_T & \geq& 1  \\
b: & \lambda_R & + & \lambda_S & & &\geq& 1 \\
c: & & & \lambda_S & + & \lambda_T &\geq& 1 
\end{array}
\end{align}
A \emph{fractional edge cover} is obtained by relaxing to
a linear programming problem, permitting edge weights to
range between $0$ and $1$.  For example, choosing
$\lambda_R = \lambda_S = \lambda_T = \tfrac{1}{2}$
yields a valid fractional cover.  Grohe and Marx
\cite{Grohe:SODA:2006} established that:
\begin{align}
|Q| &\leq |R|^{\lambda_R} \cdot |S|^{\lambda_S} \cdot |T|^{\lambda_T}
\end{align}
Or equivalently:
\begin{align}
\log |Q| &\leq \lambda_R \log |R| + \lambda_S \log |S| + \lambda_T \log |T|
\label{e:RSTfraccover}
\end{align}
Minimizing the right-hand side of (\ref{e:RSTfraccover}) yields
the AGM bound on the size of $|Q|$.  For example, with
$|R|=|S|=|T|=n$, the bound is minimized when
$\lambda_R=\lambda_S=\lambda_T=\tfrac{1}{2}$,
yielding $|Q| \leq n^{3/2}$.

\subsection{Dual formulation}
\label{s:dual}
The dual formulation, used by \cite{Atserias:FOCS:2008} to
prove tightness of the bound, is more intuitive and offers
a construction of worst-case instances that is instructive.
We introduce the dual through an example.

Consider a scenario where the sizes of R, S, T are fixed, and
$Q(a,b,c)$ has a simple cross-product
structure $Q=[2^\alpha] \times [2^\beta] \times [2^\kappa]$.
(The quantities $\alpha,\beta,\kappa$ can be interpreted
as the average number of bits to represent variables $a,b,c$;
for simplicity we assume $2^\alpha,2^\beta,2^\kappa$ to
be integers.)
From the query definition (\ref{e:RST}), it is apparent that
$(a,b,c) \in Q$ implies $(a,b) \in R$; therefore
$[2^\alpha] \times [2^\beta] \subseteq R$.
This implies $\alpha + \beta \leq \log |R|$.  Similarly
for $S$ and $T$.  The
problem of maximizing $|Q|$ can be formulated as a
linear program:
\begin{align*}
\begin{array}{ll}
\mbox{Maximize} & \log |Q| = \alpha + \beta + \kappa \\
\\
\mbox{Subject to} &
\left\{
\begin{array}{ccccccc}
\alpha & + & \beta & & &\leq& \log |R| \\
& & \beta & + & \kappa &\leq& \log |S| \\
\alpha & + & & & \kappa & \leq& \log |T|
\end{array}
\right.
\end{array}
\end{align*}
For example, setting $\log |R| = \log |S| = \log |T| = \log n$
yields $\log |Q| = \tfrac{3}{2} \log n$ at optimality, achieved
by $\alpha=\beta=\kappa=\tfrac{1}{2} \log n$
and $Q=[n^{1/2}]\times[n^{1/2}]\times[n^{1/2}]$.

The above linear program is the dual of the
fractional edge cover linear program:
using $\boldsymbol{\lambda}=[\lambda_A,\lambda_B,\lambda_C]$,
$\boldsymbol{\eta}=[\log |R|, \log |S|, \log |T|]$,
$\boldsymbol{\alpha}=[\alpha,\beta,\kappa]$, and
$\boldsymbol{1}=[1,1,1]$, the fractional edge
cover program minimizes $\boldsymbol{\eta}^\top \boldsymbol{\lambda}$
subject to $\boldsymbol{A} \boldsymbol{\lambda} \geq \boldsymbol{1}$
(each row of $\boldsymbol{A}$ yielding an inequality of \refeqn{e:inequalities})
and $\boldsymbol{\lambda} \geq 0$;
the dual form
maximizes $\boldsymbol{1}^\top \boldsymbol{\alpha}$
subject to $\boldsymbol{A}^\top \boldsymbol{\alpha} \leq \boldsymbol{\eta}$
and $\boldsymbol{\alpha} \geq 0$.
It follows from the duality property of linear programs that an
optimal solution to the dual form yields the same
upper bound on $|Q|$ as the optimal fractional edge cover.

Moreover, the dual form is constructive:
let $n_a=\lfloor 2^\alpha \rfloor$,
$n_b=\lfloor 2^\beta \rfloor$,
and $n_c=\lfloor 2^\kappa \rfloor$, and choose
\begin{align*}
R \supseteq [n_a]\times[n_b] \\
S \supseteq [n_b]\times[n_c] \\
T \supseteq [n_a]\times[n_c]
\end{align*}
padding with rubbish as necessary to attain the
desired sizes $|R|$, $|S|$, and $|T|$.
This yields a $Q$ of maximal size.

This construction prompts the following observation:
the worst cases of the AGM bound are achievable by
query results which are cross-products.
Since real-world queries rarely have such
a structure---practical database systems
avoid materializing such queries---this
suggests that an algorithm achieving
the AGM bound is not necessarily optimal for classes
of database instances encountered in practice.
This motivates our development of finer-grained
classes in \refsec{s:finer}.

\section{Leapfrog Triejoin}

\label{s:entireleapfrogtriejoin}

Leapfrog triejoin is a join algorithm for 
$\exists_1$ queries,
that is, queries definable by first-order formulae without
universal quantifiers (and, needless to say, excluding negated existential
quantifiers.)  In this paper we focus on the full conjunctive fragment
of $\exists_1$, to which our complexity bound applies.
(The additional machinery needed to go from full conjunctive queries
to $\exists_1$ is described informally in \refsec{s:extensions},
as a guide to implementors.)

In our datalog implementation, rule bodies are restricted to be
$\exists_1$ formulas.  We use leapfrog triejoin to enumerate
satisfying assignments of rule bodies.
We assume input relations are always provided in sorted order,
consistent with the data structures used by our system.
Leapfrog triejoin uses \emph{iterator interfaces} to
unify the presentation of input relations
and views of (nonmaterialized) subexpressions of a join.
A relation $A(x)$ is presented by a linear iterator,
with familiar methods such as $\mathit{next}()$ and
$\mathit{atEnd}()$, which present the elements of $A$
in order.  A disjunction such
as $A(x) \vee B(x)$ is likewise presented by a linear
iterator whose $\mathit{next}()$ method manipulates
iterators for $A,B$ to present a non-materialized view
of the disjunction.  Hence in a conjunction
$C(x),D(x)$, it does not matter whether $C$
is an input relation, or a presentation of a
non-materialized view such as $A(x)\vee B(x)$.
A similar approach is used for joins with multiple variables,
where relations and views are presented by \emph{trie iterators},
whose interface is described below.

We first describe the leapfrog join for
unary relations (\refsec{s:leapfrog}).  This is then extended to
the triejoin algorithm for full conjunctive queries
(\refsec{s:triejoin}). 
With minor embellishments,
leapfrog triejoin can tackle $\exists_1$ queries; we summarize
these in \refsec{s:extensions},
but the focus of this paper (and particularly, the
complexity analysis) is on full conjunctive queries.

\subsection{Leapfrog join for unary predicates}

\label{s:leapfrog}

The basic building block of leapfrog triejoin is a unary
join which we call \emph{leapfrog join}.
The unary leapfrog join is a variant of sort-merge join which 
simultaneously joins unary relations $A_1(x),\ldots,A_k(x)$.
The unary join is of no particular novelty 
(see e.g. \cite{Hwang:SIAMJC:1972,Demaine:SODA:2000}),
but serves as the basic building block
for leapfrog \emph{triejoin}.  Its performance bound underpins the
complexity analyses which follow.

For the purposes of leapfrog join, unary relations $A_i \subseteq \N$ are
presented in sorted order by linear iterators, one for each relation,
using this interface:

\vspace{0.5em}
\begin{centering}
\begin{tabular}{ll}
\hline
{\sf int key()}   & Returns the key at the current \\
                  & iterator position \\
{\sf next()}  & Proceeds to the next key \\
{\sf seek(int seekKey)} & Position the iterator at a least \\
                        & upper bound for seekKey, \\
                        & i.e. the least key $\geq$ seekKey, or\\
                        & move to end if no such key exists. \\
                        & The sought key must be $\geq$ the \\
                        & key at the current position. \\
{\sf bool atEnd()} & True when iterator is at the end. \\ \hline
\end{tabular}

\end{centering}
\vspace{0.5em}

\noindent
The key() and atEnd() methods are required to take $O(1)$ time,
and the next() and seek() methods are required to take $O(\log N)$
time, where $N$ is the cardinality of the relation.
Moreover, if $m$ keys are visited in ascending order,
the amortized complexity is required to be $O(1 + \log (N/m))$,
which can be accomplished using standard data structures
(notably, balanced trees such as B-trees.\footnote{For example,
if every key is visited in order
then $m=N$ and the amortized complexity is $O(1)$.  Rather than returning
to the tree root for each seek() request, the iterator ascends just
far enough to find an upper bound for the key sought.})

Leapfrog join is itself implemented as an instance of the
linear iterator interface: it provides an iterator for the
intersection $A_1 \cap \cdots \cap A_k$.  The algorithm uses
an array $\mathsf{Iter}[0 \ldots k-1]$ of pointers
to iterators, one for each relation.  In operation, the join
tracks the smallest and largest keys at which iterators
are positioned, and repeatedly moves an iterator at the smallest
key to a least upper bound for the largest key,
`leapfrogging' the iterators until they are all positioned
at the same key.  Detailed descriptions of the algorithm
follow; some readers may choose to skip to the complexity
analysis (\refsec{s:leapfrogcomplexity}).

When the leapfrog
join iterator is constructed, the leapfrog-init method
(\refalg{alg:leapfroginit}) is used
to initialize state and find the first result.
The leapfrog-init method is provided an array of
iterators; it ensures the iterators are sorted
according to the key at which they are positioned,
an invariant that is maintained throughout.

The main workhorse is leapfrog-search
(\refalg{alg:leapfrogsearch}), which finds
the next key in the intersection $A_1 \cap \cdots \cap A_k$.

\begin{algorithm}
\eIf{any iterator has $\mathsf{atEnd}()$ true}{
  $\mathit{atEnd}$ := true \;
}{
  $\mathit{atEnd}$ := false \;
  sort the array $\mathsf{Iter}[0..\mathrm{k}-1]$ by keys at which the iterators are positioned \;
  $p$ := 0 \;
  leapfrog-search()
}
\caption{\label{alg:leapfroginit}leapfrog-init()}
\end{algorithm}

\begin{algorithm} 
$x'$ := $\mathsf{Iter}[(p-1) \mod k].\mathsf{key}()$ \tcp*{Max key of any iter} 
\While{true}{
  $x$ := $\mathsf{Iter}[p].\mathsf{key}()$           \tcp*{Least key of any iter}
  \eIf{$x=x'$}{
   $\mathit{key}$ := $x$                                      \tcp*{All iters at same key}
   return\;
  }{
   $\mathsf{Iter}[p].\mathsf{seek}(x')$\;
   \eIf{$\mathsf{Iter}[p].\mathsf{atEnd}()$}{
     $\mathit{atEnd}$ := true \;
     return\;
   }{
     $x'$ := $\mathsf{Iter}[p].\mathsf{key}()$\;
     $p$ := $p+1 \mod k$\;
   }
  }
}
\caption{\label{alg:leapfrogsearch}leapfrog-search()}
\end{algorithm}

Immediately following leapfrog-init(), the leapfrog join iterator is
positioned at the first result, if any; subsequent results are obtained
by calling leapfrog-next() (\refalg{alg:leapfrognext}).
To complete the linear iterator interface,
we define a leapfrog-seek() function which finds the first element of $R_1 \cap \cdots \cap R_k$
which is $\geq $ seekKey (\refalg{alg:leapfrogseek}).

\begin{algorithm} 
\caption{\label{alg:leapfrognext}leapfrog-next()}
$\mathsf{Iter}[p].\mathsf{next}()$\;
\eIf{$\mathsf{Iter}[p].\mathsf{atEnd}()$}{
   $\mathit{atEnd}$ := true\;
   }{
   $p$ := $p+1 \mod k$\;
   leapfrog-search()\;
   }
\end{algorithm}

\reffig{f:leapfrogexample} illustrates a join of three relations.

\begin{algorithm}
\caption{\label{alg:leapfrogseek}leapfrog-seek($\mathsf{int}$ seekKey)}
$\mathsf{Iter}[p].\mathsf{seek}(seekKey)$\;
\eIf{$\mathsf{Iter}[p].\mathsf{atEnd}()$}{
   $\mathit{atEnd}$ := true\;
   }{
   $p$ := $p+1 \mod k$\;
   leapfrog-search()\;
   }
\end{algorithm}

\begin{figure*}
\small
\begin{align*}
\xymatrix @=0.6cm {
A & 0 \ar@/^1pc/[rrr]^{seek(2)} & 1 & & 3 \ar@/^1pc/[rrrrr]^{seek(8)} & 4 & 5 & 6 & 7 & 8 \ar@{=}[d] \ar@/^1pc/[rrr]^{seek(10)} & 9 & & 11 \\
B & 0 \ar@/^1pc/[rrrrrr]^{seek(3)} & & 2 & & & & 6 \ar@/^1pc/[rr]^{seek(8)} & 7 & 8 \ar@{=}[d] \ar@/^1pc/[rrrr]^{seek(11)} & 9 & & & +\infty \\
C & & & 2 \ar@/^1pc/[rrrrrr]^{seek(6)} & & 4 & 5 & & & 8 \ar@{=}[d] \ar[rr]^{next()} & & 10 \\
A \cap B \cap C & & & & & & & & & 8
}
\end{align*}
\caption{\label{f:leapfrogexample}\small
Example of a leapfrog join of three relations $A,B,C$, with
$A=\{0,1,3,4,5,6,7,8,9,11\}$ and $B$, $C$ as shown in the second and
third rows.  Initially the
iterators for $A, B, C$ are positioned (respectively) at 0, 0, and 2.
The iterator for $A$ performs a seek(2) which lands it at 3;
the iterator for $B$ then performs a seek(3) which lands at 6;
the iterator for $C$ does seek(6) which lands at 8, etc.
}
\end{figure*}
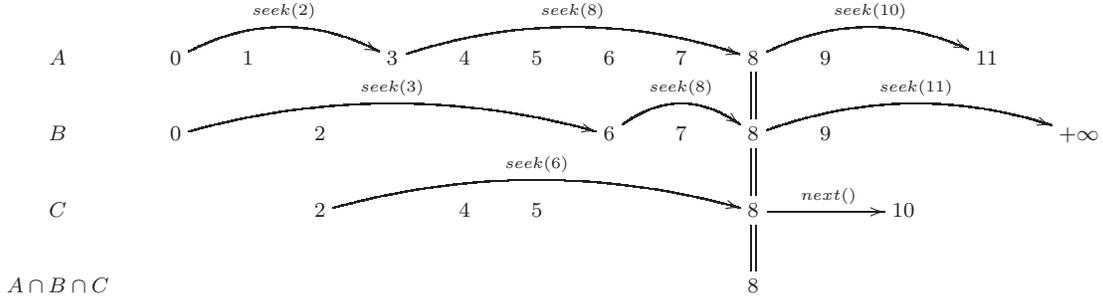

\subsection{Complexity of leapfrog join}

\label{s:leapfrogcomplexity}

In the analyses which follow, we focus on data
complexity~\cite{Vardi:STOC:1982}, i.e., we assume the query definition
to be fixed, and omit constant factors
which depend only on the structure of the query
(e.g. number of atoms and variables).

Let $N_{min} = \min \{ |A_1|, \ldots, |A_k| \}$ be the cardinality
of the smallest relation in the join,
and $N_{max} = \max \{ |A_1|,$ $\ldots, |A_k| \}$ the largest.
\begin{prop}
\label{prop:leapfrogtime}
The running time of leapfrog join is $O\left(N_{min} \log \left( N_{max}/N_{min} \right) \right)$.
\end{prop}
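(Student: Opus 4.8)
The plan is to reduce the running time to a count of iterator advances, to bound that count by $O(N_{min})$ per iterator, and then to invoke the stated amortized cost of $\mathsf{seek}$. First I would note that all the real work lies in the calls to $\mathsf{seek}$ and $\mathsf{next}$: the methods $\mathsf{key}$ and $\mathsf{atEnd}$ are $O(1)$, and inspecting leapfrog-search shows that each pass through its while loop either returns (because all iterators coincide and a result is emitted) or executes exactly one $\mathsf{seek}$ that strictly advances the iterator currently at the least key up to the prevailing maximum $x'$. Likewise leapfrog-next and leapfrog-seek perform a single advance before resuming the search. Hence, writing $s_i$ for the total number of times iterator $i$ is advanced over the entire join, the running time is governed by the $s_i$ together with the amortized cost of those advances.

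The crux is to show $s_i = O(N_{min})$ for every $i$. Let $I_{min}$ be the iterator of a smallest relation; it is advanced at most $N_{min}$ times, and its position is non-decreasing throughout the run. I would partition the execution into \emph{epochs}, one for each distinct key at which $I_{min}$ rests, so there are at most $N_{min}$ epochs. The key observation is that within a single epoch each iterator is advanced only $O(1)$ times: an advance of iterator $i$ seeks it to the current maximum $x'$, which is at least the (frozen) key $v$ of $I_{min}$, so afterwards $i$ sits at a key $\ge v$ and can only become the least iterator again once $I_{min}$ itself moves, that is, in a later epoch. Summing over the at most $N_{min}$ epochs gives $s_i = O(N_{min})$, with the constant number $k$ of iterators absorbed into the data-complexity $O(\cdot)$.

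Finally I would convert the advance count into running time. Since iterator $i$ visits its $s_i$ keys in ascending order, the stated amortized bound gives a total cost $O\!\left(s_i\left(1+\log(|A_i|/s_i)\right)\right)$ for all of its operations. Using $s_i = O(N_{min})$, $|A_i| \le N_{max}$, and the fact that $s \mapsto s\log(N_{max}/s)$ is increasing for $s \le N_{max}/e$, each such term is $O\!\left(N_{min}\log(N_{max}/N_{min}) + N_{min}\right)$; summing over the constantly many iterators yields the claimed bound (the degenerate cases $N_{min}=0$ and $N_{max}=N_{min}$ being immediate). I expect the main obstacle to be the epoch argument of the second paragraph, and specifically the careful treatment of \emph{ties}, where several iterators share the current least key: there one must verify that a tie is resolved in $O(k)$ advances within an epoch and cannot cause an iterator to be re-advanced before $I_{min}$ moves on.
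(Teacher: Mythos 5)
Your proof is correct, and its skeleton---reduce the running time to counting iterator advances, show each iterator is advanced $O(N_{min})$ times, then apply the stipulated amortized cost of ascending access---is the same as the paper's. Where you genuinely diverge is in how the counting lemma is established. The paper's argument is purely structural: in leapfrog-search, leapfrog-next and leapfrog-seek, every advance is of $\mathsf{Iter}[p]$ and is immediately followed by $p := p+1 \bmod k$, so the iterators are advanced in strict round-robin order and each is advanced once every $k$ steps; since the iterator of a smallest relation can be advanced at most $N_{min}$ times before reaching its end, the entire run has $O(k\,N_{min})$ steps. Your epoch argument instead rests on the semantic invariant that an advanced iterator lands at or above the current maximum, hence above $I_{min}$'s frozen key; this is more robust (it would survive a scheduler that advances \emph{any} least iterator, not necessarily the round-robin one), but it is precisely what creates the tie issue you flag as the main obstacle and leave unverified. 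That obstacle dissolves via the same round-robin observation the paper uses: the pointer $p$ never moves past an iterator without advancing it, so within one of your epochs $p$ makes less than a full cycle (reaching and advancing $I_{min}$ is what ends the epoch), and therefore no iterator can be advanced twice in an epoch, ties or no ties. So the paper's route is both shorter and subsumes the verification your version still owes. On the final step you are in fact more careful than the paper: the monotonicity of $s \mapsto s\log(N_{max}/s)$ needed to replace the actual visit count by $N_{min}$ inside the amortized bound is used silently in the paper's step ``the iterator for a largest relation will have $N=N_{max}$ and $m=N_{min}$,'' whereas you state and use it explicitly.
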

\begin{proof}
The leapfrog algorithm advances the iterators in a fixed pattern:
each iterator is advanced every $k$ steps of the algorithm.
An iterator for a relation with cardinality $N$ can be advanced
at most $N$ times before reaching the end;
therefore the number of steps is at most $k \cdot N_{min}$.
An iterator which visits $m$ of $N$ values in order
is stipulated to have amortized cost $O(1 + \log(N/m))$; the iterator
for a largest relation will have $N=N_{max}$ and $m=N_{min}$,
for total cost $N_{min} \cdot O(1 + \log(N_{max} / N_{min}))$.
\end{proof}

\label{s:densityexample}
The leapfrog join is able to do substantially better than pairwise
joins in some scenarios.  Suppose we have relations
$A,B,C$ where $A=\{0,\cdots,2n-1\}$,
$B=\{n,\cdots,3n-1\}$, and $C=\{0,\cdots,n-1,2n,\ldots,3n-1\}$.
Any pairwise join will produce $n$ results, 
but the intersection $A \cap B \cap C$ is empty; the leapfrog
join determines this in $O(1)$ steps.

\subsection{Trie iterators}
\label{s:trieiterator}
We extend the linear iterator interface to handle relations of arity
$> 1$.
Relations such as $A(x,y,z)$ are presented
as \emph{tries} with each tuple $(x,y,z) \in A$ corresponding to a unique path
through the trie from the root to a leaf (\reffig{f:treeiterator}).
(Note however that relations need not be stored as tries; 
in practice we use B-tree-like data structures, and 
present their contents via a trie iterator interface.)

Upon initialization, trie iterators are positioned at the root.
The linear iterator API is augmented with two methods for trie-navigation:

\vspace{0.5em}
\begin{centering}
\begin{tabular}{ll}
\hline
void open();    & Proceed to the first key at the\\
                & next depth \\
void up();      & Return to the parent key at the \\
                & previous depth \\
\hline
\end{tabular}

\end{centering}

A trie iterator for a materialized relation is required to
have $O(\log N)$ time for the open() and up() methods.

With a bit of bookkeeping, it is straightforward to
present Btree-like data structures as TrieIterators,
with each operation taking $O(\log N)$ time.\footnote{
For example, to perform a next() operation when positioned at the
node $x=1$ of \reffig{f:treeiterator}, one would seek the least upper bound of
$(1,+\infty,+\infty)$ in the B-tree representation; this would reach
the record $(3,5,2)$.}


\begin{figure}
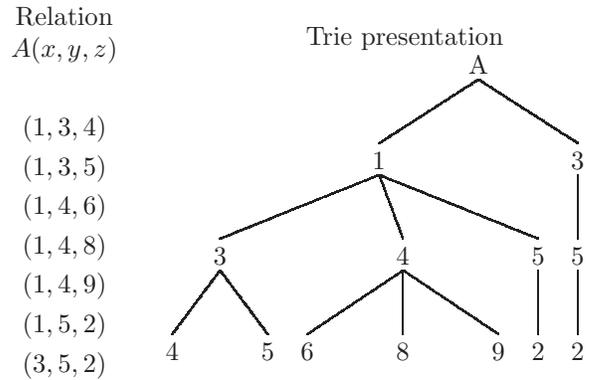

\begin{minipage}{1in}
\begin{centering}
Relation $A(x,y,z)$

\end{centering}

\begin{align*}
(1,3,4) \\
(1,3,5) \\
(1,4,6) \\
(1,4,8) \\
(1,4,9) \\
(1,5,2) \\
(3,5,2)
\end{align*}
\end{minipage}
\begin{minipage}{2.5in}
\begin{centering}
Trie presentation

\end{centering}

\synttree[A[1[3[4][5]][4[6][8][9]][5[2]]][3[5[2]]]]
\end{minipage}
\caption{\label{f:treeiterator}
Example: Trie presentation of a relation $A(x,y,z)$.
After open() is invoked at some node $n$, the linear iterator methods
next(), seek() and atEnd()
present the children of $n$.  In the above example,
invoking open() thrice on an iterator positioned at $A$
would move to the leaf node $[1,3,4]$; next() would then
move to leaf node $[1,3,5]$; another next() would result
in the iterator being atEnd().  The sequence up(), next(),
open() would then advance the iterator to the leaf node $[1,4,6]$.
}
\end{figure}

\subsection{Leapfrog Triejoin}

\label{s:triejoin}

We now describe the \emph{Leapfrog Triejoin} algorithm for full
conjunctive joins.

The triejoin algorithm requires the optimizer to choose a
\emph{variable ordering}, i.e., some permutation of the
variables appearing in the join.  For example, in the
join $R(a,b),S(b,c),T(a,c)$ we might choose the variable
ordering $[a,b,c]$.
Choosing a good variable ordering is crucial for
performance, in practice, but immaterial for the worst-case
complexity analysis presented here.  Techniques for choosing
an advantageous variable ordering are the subject of a forthcoming paper;
for the complexity analysis we fix an arbitrary ordering.

Leapfrog triejoin requires a restricted form of conjunctive joins,
attained via some simple rewrites:
\begin{enumerate}
\item Each variable can appear at most once in each
argument list.  For example, $R(x,x)$ would be rewritten
to $R(x,y), x=y$ to satisfy this requirement.  The
$x=y$ term may be presented as a nonmaterialized view
of a predicate $\mathit{Id}(x,y) \Leftrightarrow (x=y)$,
implemented by a variant of the TrieIterator interface.
\item Each argument list must be a subsequence of the
variable ordering.  For example, if the chosen variable
ordering were $[a,b,c]$ and the join contained a term
$U(c,a)$, we would rewrite this to $U'(a,c)$ and
define a materialized view $U'(a,c)\equiv U(c,a)$. (In practice
we install indices automatically when required by such rewrites,
and maintain them for use in future queries.)
%
\item To simplify the complexity analysis, each relation symbol 
may appear at most once in the query.  For a query such as 
$E(x,y)$, $E(y,z)$ we introduce a copy $E' \equiv E$ and rewrite
to $E(x,y)$, $E'(y,z)$.  This avoids awkwardness in the complexity
analysis, but is not required for implementation purposes.
\item Constants may not appear in argument lists.
A subformula such as $A(x,2)$ is rewritten to $A(x,y)$, $\mathsf{Const}_2(y)$,
where $\mathsf{Const}_2 = \{ 2 \}$.  In practice $\mathsf{Const}_\alpha$
is presented as a nonmaterialized view, using a variant of the TrieIterator
interface.
\end{enumerate}

Leapfrog triejoin employs one leapfrog join for each variable.
Consider the example $R(a,b)$, $S(b,c)$, $T(a,c)$
with the variable ordering $[a,b,c]$.  The leapfrog joins employed for the
variables $a,b,c$ in the example are:\footnote{Recall that
$R(a,\proj)$ is the projection $\{ a ~:~ \exists b ~.~ (a,b) \in R\}$,
and $R_a(b)$ is the 'curried' form $\{ b ~:~ (a,b) \in R \}$.
}

\vspace{0.5em}
\begin{tabular}{cll}
Variable & Leapfrog join & Remarks \\ \hline
$a$ & $R(a,\proj),T(a,\proj)$ & Finds $a$ present in $R$, $T$ \\
    &                         & projections \\
$b$ & $R_a(b),S(b,\proj)$ & For specific $a$, finds \\
    &                     & $b$ values \\
$c$ & $S_b(c),T_a(c)$ & For specific $a,b$, finds \\
    &                 & $c$ values
\end{tabular}
\vspace{0.5em}

\noindent
The topmost leapfrog join iterates values for $a$ which are in both the
projections $R(a,\proj)$ and $T(a,\proj)$.  When this leapfrog join
emits a binding for $a$, we can proceed to the next level join
and seek bindings for $b$ from $R_a(b),S(b,\proj)$.  For each such $b$,
we can proceed to the next level and seek a binding for
$c$ in $S_b(c),T_a(c)$.  When a leapfrog join exhausts its bindings,
we can retreat to the previous level and seek another binding for
the previous variable.
Conceptually, we can regard triejoin as a backtracking search through
a `binding trie.'


\ignore{***
\editor{This paragraph is redundant.}
It profits us to define triejoin as an implementation
of the trie iterator interface.  That is, triejoin presents
a nonmaterialized view of the join result, so
result tuples can be retrieved by exploring the trie
using the open(), next(), etc. methods.
We can then define a variant of triejoin for disjunctive joins,
which also implements the trie iterator interface; this
lets us build trie iterators for queries containing arbitrary
nestings of disjunction and conjunction.  Adding trie iterators
for complements and projections completes the toolbox needed to
tackle $\exists_1$ queries.
***}

\subsection{Triejoin implementation}

\label{s:triejoinimplementation}

At initialization, the triejoin is provided with a trie iterator
for each relation (or more generally, subformula) of the join.

The triejoin initialization
constructs an array of leapfrog join instances,
one for each variable.  The leapfrog join for a variable $x$ is
given an array of pointers to trie-iterators, one for each atom in
which $x$ appears.
For example, in the join $R(a,b)$, $S(b,c)$, $T(a,c)$, the
leapfrog join for $b$ is given pointers to the trie-iterators
for $R$ and $S$.  There is only one instance of the trie-iterator
for $R$, which is shared by the leapfrog joins for $a$ and $b$.

The leapfrog joins use the linear-iterator portion of the
trie iterator interfaces; the open/up trie navigation methods are
used only by the triejoin algorithm.
The triejoin uses a variable \emph{depth} to track the
current variable for which a binding is being sought; initially
$depth=-1$ to indicate the triejoin is positioned at the root
of the binding trie (i.e., before the first variable.)
Depths $0,1,\ldots$ refer to the first, second, etc. variables
of the variable ordering.

\ignore{***
\begin{figure}
\begin{centering}
\includegraphics[width=3in]{RSTIterExample.pdf}

\end{centering}
\caption{Example for $R(a,b)$, $S(b,c)$, $T(a,c)$ with variable
ordering [a,b,c] of the relationship between the iterator
arrays of the leapfrog
joins for each variable, and the trie iterators for R, S, and
T.}
\end{figure}
\TJ{In figure: more italics here.  Also: TJ found this hard to interpret
at a glance.  Maybe say why this figure is here: so you understand
there is just one instance of each iterator, shared among the
leapfrog joins.}
***}

Leapfrog triejoin presents a nonmaterialized view of the
query result, presented via a trie-iterator interface.
The linear iterator portions of the trie-iterator interface
(namely key(), atEnd(), next(), and seek())
are delegated to the leapfrog join for the current
variable.  (At depth -1, i.e., the root, only the operation
open() is permitted, which moves to the first variable.)  It remains
to define the open() and up() methods, which are trivial
(Algorithms \ref{alg:triejoinopen} and \ref{alg:triejoinup}).

\begin{algorithm}
\caption{\label{alg:triejoinopen}triejoin-open()}
\SetAlgoLined
\tcp{Advance to next var}
$depth$ := $depth$ + 1 \;
\For{each iter in leapfrog join at current depth}{
   iter.open() \;
}
call leapfrog-init() for leapfrog join at current depth
\end{algorithm}

\begin{algorithm}
\caption{\label{alg:triejoinup}triejoin-up()}
\SetAlgoLined
\For{each iter in leapfrog join at current depth}{
   iter.up() \;
}
\tcp{Backtrack to previous var}
$depth$ := $depth$ - 1 \;
\end{algorithm}

This completes the trie iterator interface.  To obtain the
satisfying assignments of the query formula, we simply walk the trie
presented by leapfrog triejoin, a simple exercise we omit here.



\section{Complexity Analysis}

\label{s:complexity}

We consider now the complexity of leapfrog triejoin for full conjunctive
joins of materialized relations.

\subsection{The proof strategy}

We introduce the proof strategy informally, 
before proceeding to the formal proof
of \refthm{thm:lftj}.
Consider the example join:
\begin{align*}
Q(a,b,c) \equiv R(a,b), S(b,c), T(a,c)
\end{align*}
with variable ordering $[a,b,c]$.
Suppose that $|R| \leq n$,
$|S| \leq n$, and $|T| \leq n$.
The fractional cover bound yields $|Q| \leq n^{3/2}$.

We wish to show that the triejoin runs in $O(n^{3/2} \log n)$
time for this example.
Recall that a leapfrog join of two unary relations $U,V$
requires at most $2 \cdot \min\{ |U|, |V| \}$ iterator operations.
It is readily seen that the cost at the first two trie levels $[a,b]$ 
cannot exceed $O(n)$ linear iterator operations: at the first trie level
the leapfrog join is limited by $\min(|R(a,\proj)|,|T(a,\proj)|) \leq |R(a,\proj)| \leq |R| \leq n$,
and at the second trie level the number of iterator operations
is controlled by:
\begin{align*}
& \sum_{a \in R(a,\proj),T(a,\proj)} \min \{ |R_a(b)|,|S(b,\proj)| \} \\
& \leq \sum_{a \in R(a,\proj),T(a,\proj)} |R_a(b)| \\
& \leq |R|
\end{align*}
Therefore the total number of linear iterator operations at the first two
trie levels is $O(n)$.
At the third trie level, the number of linear iterator operations
is controlled by:
\begin{align}
\sum_{(a,b) \in R(a,b),S(b,\proj),T(a,\proj)}
\min \{|S_b(c)|, |T_a(c)| \}
\label{e:thirdtrielevel}
\end{align}

We now wish to show that the quantity (\ref{e:thirdtrielevel}) is $\leq n^{3/2}$.
We do this by renumbering the $c$ values of the relations such that the join
produces a number of results equal to (\ref{e:thirdtrielevel}), without
increasing the amount of work.
Since the join can produce at most $n^{3/2}$ results (from the
fractional cover bound), this will establish that (\ref{e:thirdtrielevel})
is $\leq n^{3/2}$.

For a concrete example, suppose we had these trie presentations of $R,S,T$:
{\small
\begin{align*}
\xymatrix @=0.25cm {
& R(a,b) \ar@{-}[d] & & S(b,c) \ar@{-}[dd] & & & & & T(a,c) \ar@{-}[d] \\
a & 7 \ar@{-}[d] & & & & & & & 7 \ar@{-}[dd] \ar@{-}[ddr] \ar@{-}[ddrr] \\
b & 4 & & 4 \ar@{-}[d] \ar@{-}[dr] \ar@{-}[drr] \ar@{-}[drrr] \\
c & & & 1 & 4 & 5 & 9 & & 2 & 3 & 5 
}
\end{align*}
}

This would produce only the result tuple $(7,4,5)$.  To obtain a result
size equal to (\ref{e:thirdtrielevel}) we renumber the $c$
values, resulting in a new problem instance which produces
one result for every leaf of $T$ (the
smaller relation):
{\small
\begin{align*}
\xymatrix @=0.25cm {
& R(a,b) \ar@{-}[d] & & S(b,c) \ar@{-}[dd] & & & & & T(a,c) \ar@{-}[d] \\a & 7 \ar@{-}[d] & & & & & & & 7 \ar@{-}[dd] \ar@{-}[ddr] \ar@{-}[ddrr] \\
b & 4 & & 4 \ar@{-}[d] \ar@{-}[dr] \ar@{-}[drr] \ar@{-}[drrr] \\
c & & & 0 & 1 & 2 & 3 & & 0 & 1 & 2
}
\end{align*}
}

This results in exactly three results $(7,4,0)$, $(7,4,1)$,
and $(7,4,2)$, equalling (\ref{e:thirdtrielevel}).

In general, the renumbering produces modified relations $S',T'$ which each have
cardinality $\leq n$.
Since $n^{3/2}$ is an upper bound on the result size, it follows
that (\ref{e:thirdtrielevel}) is at most $n^{3/2}$.

The renumbering is accomplished as follows:
\begin{enumerate}
\item[(i)] Construct $S'(b,c)$ by renumbering the $c$ values
of each $S_b$-subtree to be $0,1,\ldots$, i.e.:
\begin{align*}
S'(b,\proj) &= S(b,\proj) ~~~\text{\footnotesize (Keep \emph{b} values the same)} \\
S'_b &= \{ 0,1,\ldots,|S_b|-1 \} ~~~\text{\footnotesize (Renumber \emph{c} values)}
\end{align*}
\item[(ii)] Similarly, renumber the $c$ values of each $T_a$ subtree:
\begin{align*}
T'(a,\proj) &= T(a,\proj) ~~~\text{\footnotesize (Keep \emph{a} values the same)} \\
T'_a &= \{ 0,1,\ldots,|T_a|-1 \} ~~~\text{\footnotesize (Renumber \emph{c} values)}
\end{align*}
\end{enumerate}
When we compute the leapfrog join of $S'_b = \{ 0,1,\ldots,$ $|S_b|-1 \}$
with $T'_a = \{ 0,1,\ldots,|T_a|-1 \}$,
we get exactly $\min \{ |S_b|,$ $|T_a| \}$
results.  This holds for every join at the third trie level;
therefore
the query result size is exactly the quantity (\ref{e:thirdtrielevel}).
Since the fractional cover bound gives an upper bound of $n^{3/2}$
on the query result size, we have:
\begin{align*}
\sum_{(a,b) \in R(a,b),S(b,\proj),T(a,\proj)}
\min \{|S_b(c)|, |T_a(c)| \}
&\leq n^{3/2}
\end{align*}
Hence the running time of leapfrog triejoin for the example is
$O(n^{3/2} \log n)$.


The above example illustrates the proof technique we employ
for the leapfrog triejoin complexity analysis.
The following sections generalize the
renumbering transform (\refsec{s:renumbering}), 
develop the sum-min cost bound (\refsec{s:costs}),
and formalize classes of databases to which the complexity
bound applies (\refsec{s:families}).  These lead up to the proof,
in \refsec{s:lftjproof}, of the complexity bound for leapfrog triejoin
(\refthm{thm:lftj}).

\subsection{The renumbering transform}

\label{s:renumbering}

We generalize the renumbering transformation introduced in
the previous section.  For an atom $R(x,y,z)$,
\emph{a renumbering at variable $v \in \{x,y,z\}$} is obtained by
traversing the trie representation of $R$,
and:
\begin{itemize}
\item If the variable $v$ appears in the argument list
at depth $d$, then for each node at depth $d-1$ renumber its children to
be $0,1,\ldots$; otherwise, do nothing.
\item Replace all values for variables appearing after $v$ in
the key-ordering with $0$.
\item Eliminate any duplicate tuples.
\end{itemize}
The resulting relation $R'$ is called a
renumbering of $R$.
\reffig{f:renumbering} illustrates renumberings of
a relation $R(x,y,z)$ at various depths.\footnote{
It is worth noting that alternate renumbering
schemes might be useful for certain classes of
problems.  But for the purposes of this paper,
we stick to $0,1,2,\ldots$.}


\begin{figure*}
\begin{tabular}{cc}
\begin{minipage}{3in}
\begin{centering}
\synttree[r[1[3[4][5]][4[6][8][9]][5[2]]][3[5[2]]]]

\vspace{0.1in}
(a) A relation $R(x,y,z)$

\end{centering}

\end{minipage} &
\begin{minipage}{3in}
\begin{centering}
\synttree[r[1[3[0][1]][4[0][1][2]][5[0]]][3[5[0]]]]

\vspace{0.1in}
(b) Renumbered at depth $2$, for variable $z$

\end{centering}

\end{minipage} \\
\\
\begin{minipage}{3in}
\begin{centering}
\synttree[r[1[0[0]][1[0]][2[0]]][3[0[0]]]]

\vspace{0.1in}
(c) Renumbered at depth $1$, for variable $y$

\end{centering}
\end{minipage} &
\begin{minipage}{3in}
\begin{centering}
\synttree[r[0[0[0]]][1[0[0]]]]

\vspace{0.1in}
(d) Renumbered at depth $0$, for variable $x$
\end{centering}
\end{minipage}
\end{tabular}
\caption{\label{f:renumbering}Example of the renumbering transform
applied to a relation $R(x,y,z)$.}
\end{figure*}

\subsection{Triejoin costs}

\label{s:costs}
Let $R^1,\ldots,R^m$ be the relations in the join, and
$V=[v_0,\ldots,v_{k-1}]$ be the chosen variable ordering.
Each atom (relation) in the join takes as arguments some subset of the variables $V$,
in order.  For a relation $R(v_0,v_1,$ $v_2,v_3)$,
we use this notation for currying:
\begin{align*}
R_{v_0,v_1}(v_2,v_3) &= \{ (v_2,v_3) ~:~ (v_0,v_1,v_2,v_3) \in R \}
\end{align*}
We write $R_{< i}(v_i,\ldots)$
for the curried version of all variables strictly before
$v_i$ in the ordering; e.g. $R_{<3}(v_3) = R_{v_0,v_1,v_2}(v_3)$.

Write $Q_i(v_0,v_1,\ldots,v_i)$ for the join `up to and including' variable $v_i$;
this is obtained by replacing variables $v_{i+1},\ldots,v_{k-1}$ with
the projection symbol $\proj$ in the query, and omitting any
atoms which contain only projection symbols.\footnote{Note that $Q_i$ is generally a strict superset of
the projection of the query result $Q(v_0,v_1,\ldots,v_i,\proj,\proj,\ldots,\proj)$.}
For example, with $Q=R(a,b),S(b,c),T(a,c)$, and key order $[a,b,c]$,
we would have:
\begin{align*}
Q_0 &= R(a,\proj),T(a,\proj) \\
Q_1 &= R(a,b),S(b,\proj),T(a,\proj) \\
Q_2 &= R(a,b),S(b,c),T(a,c)
\end{align*}

Let $R^\alpha_{< i}(v_i,\ldots), R^\beta_{< i}(v_i,\ldots), \cdots$ be the
relations in the leapfrog join at depth $i$.
Let $C_i$ be the sum-min of the leapfrog triejoin at tree depth $i$:
\begin{align*}
C_i &= \sum_{(v_0,\ldots,v_{i-1}) \in Q_{i-1}} \min 
\begin{array}[t]{l}
\{ |R^\alpha_{< i}(v_i,\proj,\ldots,\proj)|, \\
~~~~|R^\beta_{< i}(v_i,\proj,\ldots,\proj)|, \cdots \}
\end{array}
\label{e:triejoinsummin}
\end{align*}

To compute the join result, one uses the trie iterator presented by
leapfrog triejoin to completely traverse the trie.

\begin{prop}
\label{prop:triejoincosts}
The running time of leapfrog triejoin is
$O((\sum_{i=0}^{k-1} C_i) \log N_{max})$.
\end{prop}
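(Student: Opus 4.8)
The plan is to decompose the total running time into the contributions of the individual leapfrog joins, one per node of the binding trie, and to bound each contribution with \refprop{prop:leapfrogtime}. Leapfrog triejoin performs a depth-first walk of the binding trie, and by construction the nodes at depth $i-1$ are exactly the bindings $(v_0,\ldots,v_{i-1})\in Q_{i-1}$. Each such node is visited once; at it the algorithm issues one triejoin-open to descend to depth $i$ (calling open() on a constant number of trie iterators, each costing $O(\log N_{max})$), then runs a single leapfrog join over the curried relations $R^\alpha_{<i}(v_i,\ldots),R^\beta_{<i}(v_i,\ldots),\ldots$, and later issues a matching triejoin-up on the way back. Thus the work charged to depth $i$ is the sum, over the $|Q_{i-1}|$ nodes, of the open/up cost plus the cost of the depth-$i$ leapfrog join at that node.

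First I would bound the cost of a single depth-$i$ leapfrog join. By \refprop{prop:leapfrogtime} it is $O(N_{min}\log(N_{max}/N_{min}))$, where $N_{min}$ and $N_{max}$ are the smallest and largest sizes among the curried relations in that join. Because every node reached lies in $Q_{i-1}$, the current prefix projects into each participating atom, so each curried relation is nonempty and $N_{min}\geq 1$; hence $\log(N_{max}/N_{min})\leq \log N_{max}$ and the per-node cost is $O(N_{min}\log N_{max})$. Crucially, $N_{min}$ is exactly the $\min$ term summed in $C_i$. Summing over all nodes at depth $i-1$ gives a depth-$i$ leapfrog contribution of
\begin{align*}
O\!\left(\log N_{max}\!\!\sum_{(v_0,\ldots,v_{i-1})\in Q_{i-1}}\!\!\min\{|R^\alpha_{<i}|,|R^\beta_{<i}|,\ldots\}\right)=O(C_i\log N_{max}).
\end{align*}

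It remains to absorb the open/up bookkeeping. Its cost at depth $i$ is $O(\log N_{max})$ per node, i.e. $O(|Q_{i-1}|\log N_{max})$ in total. Here I would use that, for any fixed prefix, the number of $v_{i-1}$-extensions equals the result size of the depth-$(i-1)$ leapfrog join, which is at most its smallest curried relation; summing over prefixes yields $|Q_{i-1}|\leq C_{i-1}$ (the root contributing the negligible $|Q_{-1}|=1$). Hence the navigation overhead is dominated by $\sum_i C_i\log N_{max}$ and folds into the leapfrog contribution. Adding the bounds over all depths $i=0,\ldots,k-1$ produces the claimed $O\big((\sum_{i=0}^{k-1}C_i)\log N_{max}\big)$.

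The step that needs the most care is the accounting that ties the algorithm's operations to the combinatorial quantity $C_i$: one must argue that exactly one leapfrog join is charged to each node of $Q_{i-1}$, that $N_{min}\geq 1$ there so the per-node bound collapses cleanly to $N_{min}\log N_{max}$, and that the trie-navigation overhead is genuinely lower order via $|Q_{i-1}|\leq C_{i-1}$. The amortization subtlety of the iterator interface is already internalized by \refprop{prop:leapfrogtime} applied per node, since each open()/up() resets an iterator onto a fresh subtree; I would state this explicitly to justify invoking the per-join bound independently at every node.
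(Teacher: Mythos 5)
Your proof is correct and takes essentially the same approach as the paper's: decompose the cost into the per-prefix leapfrog joins at each depth, bounded by $C_i \log N_{max}$ via \refprop{prop:leapfrogtime}, plus the open()/up() navigation overhead, which is charged to the results of the joins at the preceding level (your observation that $|Q_{i-1}| \leq C_{i-1}$). The paper's proof is simply a terser statement of the same accounting, so no further changes are needed.
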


\begin{proof}
Let $N_{max}$ be the
cardinality of the largest input relation.  At levels $0,\ldots,k-2$,
each result of a leapfrog join incurs the cost of an open() and up()
operation, totalling $O((\sum_{i=0}^{k-2} C_i) \log N_{max})$ time by
the $O(\log N)$ performance requirement for open() and up().
At levels $0,\ldots,k-1$ the time cost of the leapfrog joins is
$O((\sum_{i=0}^{k-1} C_i) \log N_{max})$ from \refprop{prop:leapfrogtime}.
\end{proof}

\subsection{Families of problem instances}

\label{s:families}

We now formalize some concepts in preparation for asymptotic
arguments.  Chief among these is a \emph{family of problem
instances}; this concept encompasses familiar examples such as
\emph{graphs with at most $n$ edges},
and \emph{binary relations R,S,T with each relation
of size $\leq n$.}


We write $\mathrm{Str}[\sigma]$ for finite structures with
signature (vocabulary) $\sigma$.  A \emph{family of problem instances}
is a countable set
$(\mathbf{K}_n)_{n \in \N}$ indexed by a parameter $n \in \N$,
where each $\mathbf{K}_n \subseteq \mathrm{Str}[\sigma]$ is a class of
finite relational structures, and
$(i \leq j) \Longrightarrow (\mathbf{K}_i \subseteq \mathbf{K}_j)$.
(Example: \emph{graphs with at most $n$ edges} is
a family of problem instances.)

More generally, we can choose a tuple of parameters 
$\overline{n} = [n_1,\ldots,n_k] \in \N^k$,
with the usual partial ordering on tuples, such
that if $\overline{n}' = [n_1',\ldots,n_k']$ and
$n_1 \leq n_1', \ldots, n_k \leq n_k'$, then
$\mathbf{K}_{[n_1,\ldots,n_k]} \subseteq \mathbf{K}_{[n_1',\ldots,n_k']}$.
(Example: let $\sigma$ contain the binary relation symbols 
$R,S,T$, and define
$\mathbf{K}_{r,s,t}$ to be structures
with $|R| \leq r$, $|S| \leq s$, and $|T| \leq t$.)

A query $Q$ is defined by some first-order formula
$\varphi(\overline{x})$.  
For a structure $\mathcal{A} \in \mathbf{K}_n$
we write $Q^\mathcal{A}$ to mean the satisfying assignments
of $\varphi(\overline{x})$ in $\mathcal{A}$.
For simplicity, we take $\overline{x}$ to be the variable
ordering for the triejoin.

\subsection{Proof of the complexity bound}

\label{s:lftjproof}

Fix a variable ordering V.
Given structures $\mathcal{A},\mathcal{A}'$, we say $\mathcal{A}'$ is
a renumbering of $\mathcal{A}$ if it is obtained
by selecting some relation of $\mathcal{A}$ and renumbering it at some depth,
as per \refsec{s:renumbering}.
A family of problem instances is \emph{closed under renumbering}
when for every $\mathcal{A} \in \mathbf{K}_n$, if $\mathcal{A}'$
is a renumbering of $\mathcal{A}$, 
then $\mathcal{A}' \in \mathbf{K}_n$ also.

\begin{thm}
\label{thm:lftj}
Let $Q(v_0,\ldots,v_{k-1})$ be a full conjunctive query satisfying
the syntactic restrictions of \refsec{s:triejoin}, and
\begin{enumerate}
\item $(\mathbf{K}_n)_{n \in \N}$ be a family of problem instances
closed under renumbering,
\item $q(n)=\max_{\mathcal{A} \in \mathbf{K}_n} |Q^\mathcal{A}|$
be the largest query result size for any structure in $\mathbf{K}_n$, and
\item $M(n)$ be the cardinality of the largest relation
in any structure of $\mathbf{K}_n$.
\end{enumerate}
Then, Leapfrog Triejoin computes $Q$ in $O(q(n) \log M(n))$ time
over $(\mathbf{K}_n)_{n \in \N}$ using variable ordering
$[v_0,\ldots,v_{k-1}]$.
\end{thm}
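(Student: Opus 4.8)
The plan is to combine the two structural results already established---the per-level cost bound of Proposition~\ref{prop:triejoincosts} and the renumbering transform of Section~\ref{s:renumbering}---to bound each sum-min quantity $C_i$ by $q(n)$. By Proposition~\ref{prop:triejoincosts}, the running time is $O((\sum_{i=0}^{k-1} C_i)\log N_{max})$, and since every relation in any structure of $\mathbf{K}_n$ has cardinality at most $M(n)$, we have $N_{max} \leq M(n)$, so it suffices to show $\sum_{i=0}^{k-1} C_i = O(q(n))$. The query being fixed, $k$ is a constant absorbed into the $O(\cdot)$, so the crux is to prove that each individual $C_i \leq q(n)$.

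First I would fix a structure $\mathcal{A} \in \mathbf{K}_n$ and a level $i$, and recall that
\begin{align*}
C_i &= \sum_{(v_0,\ldots,v_{i-1}) \in Q_{i-1}} \min\{ |R^\alpha_{<i}(v_i,\proj,\ldots)|, |R^\beta_{<i}(v_i,\proj,\ldots)|, \cdots \}.
\end{align*}
The goal is to construct a renumbered structure $\mathcal{A}' \in \mathbf{K}_n$ whose query result size $|Q^{\mathcal{A}'}|$ is \emph{exactly} $C_i$. Following the running example, I would renumber every relation participating in the depth-$i$ leapfrog join at variable $v_i$: within each subtree hanging below a fixed prefix $(v_0,\ldots,v_{i-1})$, relabel the $v_i$-children consecutively as $0,1,2,\ldots$ and (per the renumbering definition) collapse all deeper variables to $0$. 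After this transformation, the leapfrog join at level $i$ over relations $R^\alpha_{<i}, R^\beta_{<i}, \ldots$ intersects initial segments $\{0,1,\ldots,|R^\alpha_{<i}|-1\}$, $\{0,1,\ldots,|R^\beta_{<i}|-1\}$, etc., and this intersection contains exactly $\min\{|R^\alpha_{<i}|, |R^\beta_{<i}|, \ldots\}$ elements. Summing over the surviving prefixes gives precisely $C_i$ result tuples, so $|Q^{\mathcal{A}'}| = C_i$. Since $(\mathbf{K}_n)$ is closed under renumbering, $\mathcal{A}' \in \mathbf{K}_n$, whence $C_i = |Q^{\mathcal{A}'}| \leq q(n)$.

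The main obstacle is making the ``result size equals $C_i$'' claim fully rigorous, since the renumbering only touches variable $v_i$ and the variables after it, leaving $Q_{i-1}$ unchanged. I would argue that the set of surviving prefixes $(v_0,\ldots,v_{i-1}) \in Q_{i-1}^{\mathcal{A}'}$ coincides with those in $Q_{i-1}^{\mathcal{A}}$ (renumbering at $v_i$ and below does not alter the projections onto the earlier variables), that for each such prefix the number of valid $v_i$-bindings becomes exactly the min of the curried cardinalities, and that collapsing all later variables to a single value $0$ means each $(v_0,\ldots,v_i)$-prefix extends to exactly one full tuple, so no overcounting or undercounting occurs. Two subtleties warrant care: the renumbering definition is stated for a single relation at a single depth, so achieving the simultaneous renumbering of all atoms containing $v_i$ requires applying the transform once per such atom and checking that closure under renumbering is preserved through the chain (it is, since closure is stated for single renumberings and extends by induction); and I must confirm that duplicate-tuple elimination after collapsing later variables does not reduce the count below $C_i$, which holds because distinct $v_i$-values yield distinct prefixes. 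With each $C_i \leq q(n)$ established, summing over the constant number of levels and invoking Proposition~\ref{prop:triejoincosts} with $N_{max} \leq M(n)$ completes the proof.
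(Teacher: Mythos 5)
Your proof is correct and takes essentially the same route as the paper's: the key step in both is to renumber all relations participating in the depth-$i$ join at variable $v_i$, observe that the resulting structure $\mathcal{A}'$ satisfies $|Q^{\mathcal{A}'}| = C_i$, use closure under renumbering to place $\mathcal{A}' \in \mathbf{K}_n$ so that $C_i \leq q(n)$, and then conclude via Proposition~\ref{prop:triejoincosts} with $N_{max} \leq M(n)$. The only difference is presentational --- the paper wraps this in a proof by contradiction ($C_i \in \omega(q(n))$ on infinitely many instances), while your direct per-instance bound $C_i \leq q(n)$ is, if anything, cleaner, and you make explicit two details the paper leaves implicit (that renumbering preserves the prefix set $Q_{i-1}$, and that renumbering all atoms at $v_i$ is a chain of single-relation renumberings covered by closure via induction).
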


\begin{proof}
(By contradiction).
Suppose the running time is not $O(q(n) \log M(n))$.
From \refprop{prop:triejoincosts}, the running time of leapfrog triejoin 
is $O((C_0 + \cdots + C_{k-1})$ $\log M(n))$, where
$C_i$ is the sum-min for the leapfrog join of variable $v_i$.
For this to not be $O(q(n) \log M(n))$,
some variable $v_i$ must have $C_i \in \omega(q(n))$ for infinitely many
instances $\mathcal{A}$.
For each such $\mathcal{A}$, renumber all relations for variable $v_i$,
and revise $Q(v_0,\ldots,v_{k-1})$ appropriately.  This results in structures
$\mathcal{A}'$ with $|Q^{\mathcal{A}'}| = C_i$.
Since the family is closed under renumbering, 
$\mathcal{A}' \in \mathbf{K}_n$; but $|Q^{\mathcal{A}'}| \in \omega(q(n))$,
contradicting the definition of $q(n)$.
\end{proof}

Note that \refthm{thm:lftj} does not depend on the
fractional edge cover bound (\refsec{s:fractionalcover}).
The fractional edge cover bound provides a means to
bound $q(n)$ for families of problem instances
defined by constraints on the size of input relations.
Since renumbering does not increase the sizes of
relations, such families are closed under renumbering.
The worst-case optimality in the sense of NPRR
\cite{Ngo:PODS:2012} is immediate:
\begin{cor}[\refthm{thm:lftj}]
\label{cor:agmlftj}
The run time of Leapfrog Triejoin is bounded by the
fractional edge cover bound, up to a log factor.
\end{cor}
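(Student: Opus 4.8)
The plan is to obtain this corollary as an immediate specialization of \refthm{thm:lftj}: instantiate the abstract family of problem instances as the one cut out by the size constraints on the input relations, verify that such a family meets the hypotheses of the theorem, and then use the fractional edge cover bound merely to control the quantity $q(n)$ that the theorem leaves abstract. Concretely, I would fix the family $(\mathbf{K}_n)_{n \in \N}$ (or $\mathbf{K}_{\overline{n}}$ for a tuple of bounds) consisting of all structures whose relations satisfy the prescribed cardinality constraints, for example $|R| \le r$, $|S| \le s$, $|T| \le t$ in the running example.

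The one genuine verification is that this family is \emph{closed under renumbering}, as required by hypothesis~(1) of \refthm{thm:lftj}. Given $\mathcal{A} \in \mathbf{K}_n$ and any renumbering $\mathcal{A}'$ obtained by renumbering a single relation at a single depth (per \refsec{s:renumbering}), I must check that $\mathcal{A}'$ still lies in $\mathbf{K}_n$. This holds because each operation comprising the renumbering transform can only preserve or reduce the cardinality of a relation: relabeling the children of each node to $0,1,\ldots$ is a bijection on children and leaves the tuple count fixed, collapsing the values of later variables to $0$ can only identify tuples, and the explicit elimination of duplicates removes tuples. Hence $|R'| \le |R|$ for every relation, so every size constraint that held for $\mathcal{A}$ still holds for $\mathcal{A}'$, and therefore $\mathcal{A}' \in \mathbf{K}_n$.

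With closure established, the remaining steps are substitutions. The fractional edge cover bound of \refsec{s:fractionalcover} gives, by definition, an upper bound $Q^\ast$ on $|Q^{\mathcal{A}}|$ for every structure meeting the size constraints, so $q(n) = \max_{\mathcal{A} \in \mathbf{K}_n} |Q^{\mathcal{A}}| \le Q^\ast$. Applying \refthm{thm:lftj} to this family yields a running time of $O(q(n)\,\log M(n))$, where $M(n)$ is the largest relation cardinality; substituting $q(n) \le Q^\ast$ gives $O(Q^\ast \log M(n))$, i.e.\ the fractional edge cover bound up to the stated log factor. I do not expect a real obstacle here: the only nontrivial point is confirming that duplicate elimination and value-collapsing never inflate relation sizes, and once that is in hand the corollary is a one-line consequence of the theorem together with the AGM bound.
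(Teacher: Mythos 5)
Your proposal is correct and takes essentially the same route as the paper: the paper's own justification is exactly that families defined by constraints on input relation sizes are closed under renumbering (because renumbering never increases relation cardinalities), after which \refthm{thm:lftj} applies with $q(n)$ controlled by the fractional edge cover bound, yielding $O(q(n)\log M(n))$. Your only addition is to spell out in detail why each step of the renumbering transform (bijective relabeling of children, collapsing later values to $0$, duplicate elimination) cannot increase cardinality, which the paper asserts in a single sentence.
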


Example: for the $R, S, T$ example we could define the family of
instances by $|R| \leq n$, $|S| \leq n$,
$|T| \leq n$; the fractional edge cover bound
provides $q(n)=n^{3/2}$, and therefore the running
time of leapfrog triejoin is $O(n^{3/2} \log n)$.

\section{Improving on the NPRR bound}

\label{s:finer}

In this section we show that leapfrog triejoin achieves optimal
worst-case running time (up to a log factor) over finer-grained
families of problem instances than those defined by the AGM (fractional
edge cover) bound, and that NPRR is \emph{not} worst-case optimal
for such families.

\subsection{Instances defined by projection bounds}

\label{s:fasterthannprr}

\refcor{cor:agmlftj} established that the leapfrog triejoin complexity
bound of $O(q(n) \log M(n))$ from \refthm{thm:lftj} applies to families
of problem instances defined by constraints on the sizes of input relations.
We demonstrate that \refthm{thm:lftj}
applies to finer-grained families defined by
constraints on the size of \emph{projections} of input relations.
This establishes that leapfrog triejoin is worst-case optimal
for such families.  The NPRR algorithm is not;
we exhibit a family of problem instances for which leapfrog triejoin
is optimal, and NPRR is asymptotically slower.  This may partially
explain the faster performance of LFTJ observed in practice 
\cite{Nguyen:private:2012}.

By way of example, we return to $Q(a,b,c)=R(a,b)$, $S(b,c)$, $T(a,c)$.
Consider a family of problem instances $(\mathbf{K}_n)_{n \in \omega}$
defined by the following constraints on projection sizes of $R,S,T$
(recall that $R(a,\_) = \pi_1(R)$):
\begin{align*}
\begin{array}{rclcrcl}
|R(a,\_)| &\leq & n^{3/8} &~~~~ & |R(\_,b)| &\leq & n^{5/8} \\
|S(b,\_)| &\leq & n^{5/8} & & |S(\_,c)| &\leq & n^{3/8} \\
|T(a,\_)| &\leq & n       & & |T(\_,c)| &\leq& 1
\end{array}
\end{align*}

From the definition of the renumbering transform (\refsec{s:renumbering}),
the following is evident:
\begin{prop}
Applying a renumbering transform to a relation $R$ does not
increase the cardinality of any projections of $R$.
\end{prop}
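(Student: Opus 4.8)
The plan is to verify the inequality $|\pi_w(R')| \le |\pi_w(R)|$ separately for each argument position $w$ of $R$, where $R'$ denotes the renumbering of $R$ at some variable $v$ sitting at depth $d$ of the trie. Since the three operations comprising the transform (relabelling children at depth $d$, overwriting everything below $v$ by $0$, and deduplicating) affect a column differently according to whether it lies before, at, or after depth $d$, I would split the argument into exactly these three cases.

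For a column $w$ strictly before $v$ in the key ordering (depth $< d$), the renumbering leaves every node at depths $0,\ldots,d-1$ untouched: only the \emph{children} of depth-$(d-1)$ nodes, which reside at depth $d$, are relabelled, and the values strictly below $v$ are overwritten. Hence the multiset of values appearing at each depth $< d$ is unchanged, and in fact $\pi_w(R') = \pi_w(R)$ exactly, so there is nothing to lose here.

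The crux is the column $w = v$ at depth $d$, which I expect to carry the real content of the proof. The transform relabels, \emph{within each depth-$(d-1)$ sibling group}, that group's children as $0,1,\ldots$; this is a bijection on each sibling group, so a node with $c$ children still has $c$ children, now labelled $\{0,\ldots,c-1\}$. Taking the union over all sibling groups gives $\pi_v(R') = \{0,\ldots,C_{\max}-1\}$, where $C_{\max}$ is the largest number of children of any depth-$(d-1)$ node, whence $|\pi_v(R')| = C_{\max}$. But that same maximal sibling group already supplies $C_{\max}$ \emph{distinct} values to $\pi_v(R)$, so $|\pi_v(R)| \ge C_{\max} = |\pi_v(R')|$. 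The subtlety to get right is that the values coming from different sibling groups may collide after relabelling, which is precisely why the projection of $R'$ cannot exceed the size of a single largest group rather than their total.

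For a column $w$ after $v$ (depth $> d$), every value is overwritten by $0$, so $\pi_w(R') \subseteq \{0\}$ and $|\pi_w(R')| \le 1 \le |\pi_w(R)|$ whenever $R$ is nonempty, the claim being vacuous when $R = \emptyset$. Finally, I would observe that the duplicate-elimination step is immaterial to the whole argument: deleting a repeated tuple always leaves an identical copy in place, so it removes no value from any column projection. Combining the three cases then yields $|\pi_w(R')| \le |\pi_w(R)|$ for every argument position $w$, as claimed.
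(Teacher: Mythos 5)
Your case analysis is correct as far as it goes, and it is actually more detailed than the paper itself, which offers no proof: the proposition is simply declared evident from the definition of the renumbering transform. For projections onto a \emph{single} argument position, all three of your cases are sound: positions before $v$ are untouched, the position of $v$ is handled by your maximal-sibling-group argument (and you correctly flag that cross-group collisions after relabelling can only shrink the projection), positions after $v$ collapse to $\{0\}$, and duplicate elimination never deletes a value from any column.

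The gap is that the proposition claims the bound for \emph{any} projection of $R$, not only unary ones, and your column-by-column decomposition never addresses projections onto several positions. For a ternary relation $R(x,y,z)$ renumbered at $y$, nothing in your argument bounds the binary projection $|R'(x,y,\proj)|$, nor the projection onto all positions, i.e. $|R'| \leq |R|$; and the downstream use of the proposition (Theorem \ref{thm:lftjproj}, optimality for families defined by arbitrary projection-cardinality constraints) needs these cases once relations have arity greater than two. The statement remains true, and the fix is a refinement of your own crux idea. Given a set $W$ of positions, first discard from $W$ all positions after depth $d$: these are constantly $0$ in $R'$, so discarding them leaves $|\pi_W(R')|$ unchanged while only shrinking $|\pi_W(R)|$, so the reduced case suffices. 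If $v \notin W$ the projection is now literally unchanged. If $v \in W$, let $U$ be the positions of $W$ before depth $d$ and group the depth-$(d-1)$ trie nodes by their $U$-values; within the group of nodes sharing the $U$-value $\bar u$, the renumbered relation contributes exactly $\max_p c(p)$ tuples of the form $(\bar u, j)$ to $\pi_W(R')$, where $c(p)$ is the child count of node $p$, while the original relation contributes at least that many distinct tuples $(\bar u, x)$, since a single largest node in the group already has $\max_p c(p)$ distinct children. Summing over the groups gives $|\pi_W(R')| \leq |\pi_W(R)|$; your unary case is the instance $U = \emptyset$, where all nodes form one group.
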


Therefore families defined by constraints on projection cardinalities
are closed under renumbering, and the following is immediate from
\refthm{thm:lftj}:
\begin{thm}
\label{thm:lftjproj}
Leapfrog triejoin is worst-case optimal for families of problem
instances defined by cardinality constraints on projections of
input relations.
\end{thm}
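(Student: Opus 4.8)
The plan is to derive the theorem as an immediate corollary of \refthm{thm:lftj}, so that the only genuine work is to check that the hypotheses of that theorem hold for the family at hand. Concretely, I would fix a family $(\mathbf{K}_n)_{n \in \N}$ whose membership is dictated by a finite list of upper bounds on projection cardinalities, as in the $R,S,T$ example where $|R(a,\proj)| \leq n^{3/8}$, $|S(\proj,c)| \leq n^{3/8}$, and so on. The crux is establishing closure under renumbering. Take any $\mathcal{A} \in \mathbf{K}_n$ and any renumbering $\mathcal{A}'$, obtained by renumbering a single relation of $\mathcal{A}$ at a single depth per \refsec{s:renumbering}. The preceding Proposition guarantees that no projection cardinality of the renumbered relation increases, and every other relation is left untouched; hence $\mathcal{A}'$ satisfies \emph{exactly} the same projection bounds that certified $\mathcal{A} \in \mathbf{K}_n$. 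Therefore $\mathcal{A}' \in \mathbf{K}_n$, and the family is closed under renumbering.

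With closure in hand, \refthm{thm:lftj} applies verbatim and yields running time $O(q(n) \log M(n))$, where $q(n)=\max_{\mathcal{A} \in \mathbf{K}_n} |Q^\mathcal{A}|$ is the largest result size over the family and $M(n)$ bounds the largest relation cardinality. To upgrade this upper bound into a \emph{worst-case optimality} statement (up to the log factor), I would invoke the trivial output-size lower bound: any algorithm correctly enumerating the satisfying assignments of $Q$ must, on the worst instance in $\mathbf{K}_n$, spend $\Omega(q(n))$ time merely to emit the $q(n)$ result tuples. Matching this $\Omega(q(n))$ lower bound against the $O(q(n)\log M(n))$ upper bound is precisely the claim.

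I expect no substantive obstacle; the heavy lifting is already packaged in \refthm{thm:lftj} and in the Proposition, which I am taking as given. The one point deserving a moment's care is conceptual rather than technical: the definition of worst-case optimality, and in particular the need to articulate the output-size lower bound, since \refthm{thm:lftj} by itself delivers only the upper half. Beyond that, the argument is linear over the finitely many projection constraints and reduces cleanly to the earlier theorem, so the remaining steps are routine.
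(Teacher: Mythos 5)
Your proposal is correct and follows essentially the same route as the paper: the paper likewise derives the theorem immediately from \refthm{thm:lftj} by observing (via the stated Proposition) that renumbering never increases projection cardinalities, so families defined by projection constraints are closed under renumbering. Your explicit articulation of the $\Omega(q(n))$ output-size lower bound is a reasonable elaboration of what ``worst-case optimal'' means here, which the paper leaves implicit by appealing to optimality in the sense of NPRR.
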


Continuing our example, from the above constraints on projections
of $R,S,T$
it is easily inferred that $|Q(a,\_,\_)|$ $\leq n^{3/8}$,
$|Q(\_,b,\_)| \leq n^{5/8}$, and $|Q(\_,\_,c)| \leq 1$.
Hence $|Q| \leq n$.
By \refthm{thm:lftjproj}, leapfrog triejoin runs in time
$O(n \log n)$ over this family.
In the next section we establish that NPRR has running
time $\Theta(n^{1.375})$ for this family.  This counterexample
establishes:

\begin{prop}
\label{prop:NPRRnot}
NPRR is \emph{not} worst-case optimal for families of problem
instances defined by cardinality constraints on projections of
input relations.
\end{prop}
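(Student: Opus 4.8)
The plan is to prove the proposition by exhibiting the concrete counterexample family already introduced above and showing that the two algorithms separate asymptotically on it. First I would record the optimal side. For this family the projection constraints force $q(n) \le n$: since $Q(a,\_,\_) \subseteq R(a,\_)$, $Q(\_,b,\_) \subseteq R(\_,b)$, and $Q(\_,\_,c) \subseteq T(\_,c)$, the input bounds give $|Q(a,\_,\_)| \le n^{3/8}$, $|Q(\_,b,\_)| \le n^{5/8}$, and $|Q(\_,\_,c)| \le 1$, so that $|Q| \le n^{3/8}\cdot n^{5/8}\cdot 1 = n$. The family is closed under renumbering (by the preceding proposition, renumbering never enlarges a projection), so \refthm{thm:lftjproj} applies and leapfrog triejoin runs in $O(q(n)\log M(n)) = O(n\log n)$. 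It therefore suffices to produce a single instance in the family on which NPRR spends $\omega(n\log n)$ time.

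The instance I would use is the ``full bipartite'' one: $R=[n^{3/8}]\times[n^{5/8}]$ in the $(a,b)$ coordinates, $S=[n^{5/8}]\times[n^{3/8}]$ in $(b,c)$, and $T=[n^{3/8}]\times\{c_0\}$ in $(a,c)$ for a fixed $c_0$. One checks directly that this obeys every projection constraint, with $|R|=|S|=n$ and $|T|=n^{3/8}$. The technical heart of the proof is then to instantiate NPRR's recursion on this instance and track its running time. The quantity I expect to dominate is the cost NPRR incurs enumerating, for each value of $b$, the product of its $R$-neighbourhood with its $S$-neighbourhood before the pruning against $T$ can take effect; this contributes
\[
\sum_{b} |\{a : (a,b)\in R\}|\,|\{c : (b,c)\in S\}| = n^{5/8}\cdot n^{3/8}\cdot n^{3/8} = n^{11/8} = n^{1.375}.
\]
Since $n^{1.375}=\omega(n\log n)$, this separates NPRR from the optimum and establishes the proposition; a matching upper bound $O(n^{1.375})$ over the whole family would then upgrade the statement to the $\Theta(n^{1.375})$ claimed in the surrounding text.

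I expect the main obstacle to be the NPRR analysis itself, for two reasons. First, NPRR's generic guarantee yields only the \emph{weaker} bound $O(Q^\ast)=O(n^{3/2})=O(n^{12/8})$ here (all three relations may reach size $n$, so the fractional cover bound is $n^{3/2}$); to obtain the lower bound $\Omega(n^{11/8})$ I must descend into NPRR's actual control flow on the chosen instance---identifying the degree threshold it selects and the stage at which it forms the $b$-mediated $(a,c)$ candidates---rather than quoting its worst-case bound. Second, to justify the $\Theta$ rather than a mere separation I would have to verify that no admissible instance drives NPRR above $n^{11/8}$, which amounts to optimising over degree sequences subject to the six projection constraints. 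For the proposition as stated, however, only the lower-bound half is required: the full-bipartite instance witnesses $\Omega(n^{11/8})$ NPRR time against $O(n\log n)$ for leapfrog triejoin, so NPRR is not worst-case optimal for projection-constrained families.
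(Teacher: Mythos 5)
Your overall strategy is the paper's own: the same projection-constrained family, the same derivation of $q(n)\leq n$ from the projection bounds, and closure under renumbering so that \refthm{thm:lftjproj} gives $O(n\log n)$ for leapfrog triejoin. The gap is in your choice of witness instance. You trimmed $T$ down to $[n^{3/8}]\times\{c_0\}$, and with that trimming the claimed $\Omega(n^{11/8})$ lower bound for NPRR is not merely unproven but false. With $|R|=|S|=n$ and $|T|=n^{3/8}$, the fractional cover bound of \emph{your instance} (taking $\lambda_R=\lambda_S=\lambda_T=\tfrac{1}{2}$) is $n^{1/2}\cdot n^{1/2}\cdot n^{3/16}=n^{19/16}$, and NPRR's own worst-case guarantee therefore caps its running time on your instance at $O(n^{19/16})=o(n^{11/8})$. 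Concretely, the escape route is the heavy-key branch you did not cost out: if the threshold $\tau$ is below $n^{3/8}$, then every $b$ is heavy, $D=[n^{5/8}]$ and $G=\emptyset$, so your claimed dominant cost $\sum_b |R_b|\,|S_b|$ (which is incurred only for \emph{light} keys) never arises; instead the algorithm constructs and filters $D\times T$, which on your instance has size $n^{5/8}\cdot n^{3/8}=n$, i.e.\ linear time. Since NPRR's threshold selection is size-adaptive (that is how it meets the AGM bound), "descending into its control flow" on your instance would show it running fast, not slow.

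The fix is exactly what the paper does: keep $T$ at full size, $T=[n]\times[1]$, which the constraints $|T(a,\_)|\leq n$, $|T(\_,c)|\leq 1$ permit. The extra tuples of $T$ join with nothing in $R$, so $Q$ and the $O(n\log n)$ bound for leapfrog triejoin are unchanged, but now \emph{both} of NPRR's branches are expensive for \emph{every} choice of $\tau$: if $\tau\geq n^{3/8}$ then $D=\emptyset$ and step (2) builds $G\bowtie S=R\bowtie S$ of size $n^{11/8}$; if $\tau<n^{3/8}$ then $D=[n^{5/8}]$ and step (1) builds $D\times T$ of size $n^{5/8}\cdot n=n^{13/8}$. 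Minimizing over $\tau$ gives $\Theta(n^{11/8})=\Theta(n^{1.375})$, which is all the proposition needs (and is the figure quoted in the text). Note also the sanity check that your instance fails and the paper's passes: the paper's instance has AGM bound $n^{3/2}>n^{11/8}$, so the claimed lower bound is consistent with NPRR's general guarantee.
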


\subsubsection{Counterexample for \refprop{prop:NPRRnot}}

Consider the behaviour of the NPRR algorithm for an instance with:
\begin{align*}
\begin{array}{lcllllll}
R &=& [n^{3/8}] & \times & [n^{5/8}] \\
S &=&           &        & [n^{5/8}] & \times & [n^{3/8}] \\
T &=& [n]       &        &           & \times & [1] \\
Q &=& [n^{3/8}] & \times & [n^{5/8}] & \times & [1]
\end{array}
\end{align*}

We follow the exposition of Example 2 of \cite{Ngo:PODS:2012}.
Let $\tau \geq 0$ be a parameter, which is used to
define a threshold for \emph{heavy join keys}.
A join key $b \in R(\_,b)$ is heavy if it appears
in more than $\tau$ tuples of $R$; let $D$ be the
set of heavy join keys.
The algorithm handles tuples $(a,b) \in R$ for heavy join keys
$b \in D$ separately from those with $b \not\in D$.
Let $G \subseteq R$ be those tuples containing no
$b \in D$.
The algorithm (1) 
constructs $D \times T$ and filters using hash tables
on $S$ and $R$; and (2) constructs
$G \bowtie S$ and filters using a hash table on $T$.
The union of these two results yields $Q$.

We now consider the running time.  There are two cases, which depend
on the choice of $\tau$ (which is taken to be $n^{1/2}$ in Example 2
of \cite{Ngo:PODS:2012}, but we consider arbitrary choice of $\tau$
here.)

Case 1: $\tau \geq n^{3/8}$.  Then $D$ will be empty, and $G=R$;
the result will be constructed using only step (2): constructing
$G \bowtie S = R \bowtie S$ and filtering using $T$.
Since $|R \bowtie S| = n^{1+3/8}$, the running time
will be $\Theta(n^{1.375})$.

Case 2: $\tau < n^{3/8}$.  Then $D = [n^{5/8}]$, and the result will be
constructed using only step (1): since $|D \times T| = n^{1+5/8}$,
the running time will be $\Theta(n^{1.625})$.

With the best choice of $\tau$ the running time is
$\Theta(n^{1.375})$.  Since leapfrog triejoin has running
time $O(n \log n)$ for the family of problem instances
containing this example, we have demonstrated that
leapfrog triejoin can be asymptotically faster than the NPRR
algorithm.

\ignore{***
\subsection{Families generated by prototypes}

\label{s:prototypes}

We now demonstrate optimality for very fine-grained family
of problem instances.  Let $(\mathcal{A}_n)_{n \in \omega}$ be a
sequence of finite structures over a common signature,
satisfying $R^{\mathcal{A}_i} \subseteq R^{\mathcal{A}_j}$ when $i\leq j$,
for each relation $R$ of the signature.
We use the $\mathcal{A}_i$ as prototypes
to generate a family of
problem instances, and show this family satisfies
the requirements of \refthm{thm:lftj}; hence
leapfrog triejoin is worst-case optimal (up to a log factor)
for such families.  Such families are arguably the finest-grained
family of problem instances which are practical; they can
be used as building blocks to establish bounds for 
more interesting coarser-grained classes.

First, some definitions.  Given two relations $R,R' \subseteq \N^k$,
we write $R' \lesssim R$ to mean the trie for $R'$
can be obtained from the trie for $R$ by reordering children and
pruning.  For example:
\begin{align*}
\xymatrix @=0.1cm {
&& R' &&&&&&&  &   &   &   &   &   & R \\
&& \xynode{} \ar@{-}[dll] \ar@{-}[dr] &&&&&&&  &   &   &   &   &   & \xynode{} \ar@{-}[dll] \ar@{-}[drr] &   &   \\
\xynode{} \ar@{-}[d] &&& \xynode{} \ar@{-}[dl] \ar@{-}[drr] &&&&&&  &   &   &   & \xynode{} \ar@{-}[dlll] \xynode{} \ar@{-}[dr] \xynode{} \ar@{-}[drrr] &   &   &   & \xynode{} \ar@{-}[d] \\
\xynode{} \ar@{-}[d] && \xynode{} \ar@{-}[dl] \ar@{-}[dr] &&& \xynode{} \ar@{-}[dl] \ar@{-}[dr] &&&&  & \xynode{} \ar@{-}[dl] \ar@{-}[dr]  &   &   &   & \xynode{} \ar@{-}[dll] \ar@{-}[d] \ar@{-}[dr] &   & \xynode{} \ar@{-}[d] & \xynode{} \ar@{-}[d] \\
\xynode{} & \xynode{} & & \xynode{} & \xynode{} &&\xynode{} &&& \xynode{} &   & \xynode{}  & \xynode{} &  & \xynode{}  & \xynode{} &  \xynode{} & \xynode{}
}
\end{align*}
\noindent
For the purposes of this section, the actual values in $R,R'$
do not matter; only the abstract structure of their trie
representations is relevant.

We formalize the $\lesssim$ relation as follows.
Given tuples
$\overline{x},\overline{x}' \in \N^k$, define the
\emph{match length} $m(\overline{x},\overline{x}')$
to be the length of the longest tuple-prefix on which
$\overline{x}$ and $\overline{x}'$ agree.  (Example:
a match length of 2 would imply $x_1=x_1'$,
$x_2=x_2'$, but $x_3 \neq x_3'$.)
Given two relations $R'$ and $R$,
define $R' \lesssim R$ to mean there exists an injection
$f : R' \rightarrow R$ such that for any
$\overline{x},\overline{x}' \in R'$,
$m(\overline{x},\overline{x}') = m(f(\overline{x}),f(\overline{x}'))$.

The relation $R' \lesssim R$ captures useful constraints on
frequency statistics.  For binary relations $R'(x,y)$ and $R(x,y)$, for
example, $R' \lesssim R$ implies that the number of distinct
$x$ values in $R'$ is at most that of $R$; and that if
$h_i$ is the number of tuples containing the $i$th
most frequent $x$ value in $R$, and similarly for $h_i'$ in $R'$,
then $h_i' \leq h_i$.

Given two structures $\mathcal{A},\mathcal{A}'$, we say
$\mathcal{A}' \lesssim \mathcal{A}$ when for each relation $R$
of the signature, $R^{\mathcal{A}'} \lesssim R^{\mathcal{A}}$.
Finally, define the \emph{family generated by prototypes} $(\mathcal{A}_n)_{n \in \omega}$ to be $(\mathbf{K}_n)_{n \in \omega}$, where:
\begin{align}
\mathbf{K}_n = \{ \mathcal{A}' \in \mathrm{Str}[\sigma] ~:~ \mathcal{A}' \lesssim \mathcal{A}_n \}
\end{align}

The definition of $\lesssim$ was contrived so that
$(\mathbf{K}_n)_{n \in \omega}$ contains the prototypes
i.e. $\mathcal{A}_n \in \mathbf{K}_n$, and
is a family of problem
instances in the sense of \refsec{s:families}:
(1) it satisfies $\mathbf{K}_i \subseteq \mathbf{K}_j$ for $i \leq j$,
and (2) each $\mathbf{K}_n$ is closed under renumbering.
Hence \refthm{thm:lftj} applies:
\begin{cor}
\label{cor:prototypes}
Leapfrog triejoin is worst-case optimal, up to a log factor,
for families of problem instances generated by prototypes.
\end{cor}

***}

\section{Discussion and future Work}


\subsection{Removing the log factor}

\label{s:hashvariant}

Ken Ross suggested the following variant of leapfrog triejoin
which eliminates the $\log M(n)$ factor of the complexity bound
\cite{Ross:private:2012}.
For a relation $R(a,b)$, maintain a hash table for
$R(a,\_)$ i.e. for the projection $\pi_1(R)$.  For each
$a \in R(a,\_)$ maintain a hash table for $R_a(b)$.
Each entry in the hash table for $R(a,\_)$ contains a
pointer to the hash table for $R_a(b)$.  (Similarly for $k$-ary relations
with $k > 2$).
Replace each leapfrog join with a scan of the smallest
relation, with lookups into hash tables for the other relations.
This eliminates the log factor,
giving a running time of $O(q(n))$.

It will be interesting to investigate when
the asymptotic improvement offered by hash tables translates into
practical advantage, and whether query optimizers can be trained
to efficiently select trie versus hash table representations.
There are countervailing factors to be weighed against the
asymptotic improvement:
\begin{itemize}
\item The leapfrog join of unary relations $A_1,\ldots,A_k$ can require
substantially fewer than $\min \{ |A_1|,\ldots,|A_k| \}$ iterator operations
in practice,
due to differences in data distribution amongst the relations.  Using leapfrog join ensures that
you do not pay for the relation sizes per se, but rather for the
interleavings where one relation interposes itself into another.
One example of this is given in
\refsec{s:densityexample}, where a join of three relations of
size $n$ is performed with $O(1)$ iterator operations.
This advantage is not obviously achievable by the hash table
variant.
\item The $\log M(n)$ factor in the leapfrog triejoin
complexity bound is a tax not always applied; the log factor
reflects the potential cost of sparse access patterns into relations,
when leaping between distant keys.
When access patterns are dense, the log factor vanishes.
For example, taking $R=S=T=[n^{1/2}] \times [n^{1/2}]$, and representing
$R,S,T$ as tries, the running time is $O(n^{3/2})$.
\item Hash tables imply random memory access patterns,
which are notoriously costly in steep memory hierarchies; leapfrog
triejoin frequently exhibits sequential access patterns, which are better exploited
by current architectures.
\end{itemize}
\ignore{***

Leapfrog triejoin 

In practice this may be one of those cases where 
vagaries of 
However, there are practical arguments in favour of
using the leapfrog join as described in \refsec{s:leapfrog}, foregoing the
apparent $\log M(n)$ performance improvement offered by the use of hash tables:
\begin{enumerate}
\item The performance advantage of hash lookups
is dependant on the RAM machine model,
which notoriously ignores the steep memory hierarchy of real-world
systems.  In practice, the random accesses required by
hash lookups, and the ensuing cacheline misses, TLB misses and
page faults, would often swamp the $\log M(n)$ savings, particularly for
large external-memory queries.  It would be interesting to compare
leapfrog triejoin and the hash variant in a
more realistic machine model (e.g. PDM \cite{Vitter:CSUR:2001},
BT \cite{Aggarwal:SFCS:1987}, etc.) which accounts for efficiency of memory
access patterns.
\item If relations are represented natively as tries
(rather than presenting e.g. B-trees as tries, as mentioned
in \refsec{s:trieiterator}),
then the $\log$ factors vanish
for many interesting cases.  Since the leapfrog join of
relations $A_1,\ldots,A_k$ with $N_{min},N_{max}$ being the
smallest and largest relation sizes
takes time $O(N_{min}$ $(1+\log(N_{max}/N_{min})))$, the $\log$ factor vanishes
when $N_{max} \in \Theta(N_{min})$.
This is the case, for instance, for the running example using
$R,S,T$, with $R=S=T=[n^{1/2}] \times [n^{1/2}]$: leapfrog triejoin
computes the result in running time $O(n^{3/2})$.
\item The leapfrog join of unary relations $A_1,\ldots,A_k$ can require
substantially fewer than $\min \{ |A_1|,\ldots,|A_k| \}$ iterator operations
in practice,
due to differences in data distribution amongst the relations.  Using leapfrog join ensures that
you do not pay for the relation sizes per se, but rather for the
interleavings where one relation interposes itself into another.
One example of this is given in
\refsec{s:densityexample}, where a join of three relations of
size $n$ is performed with $O(1)$ iterator operations.

\end{enumerate}
***}

\ignore{***
\noindent
In summary, we suspect it will prove advantageous to use the leapfrog
join, rather than the hash variant.  A thorough experimental evaluation is planned
to compare this variant with leapfrog triejoin.
***}

\subsection{Extension to $\exists_1$ queries}

\label{s:extensions}

The implementation of leapfrog triejoin in our commercial database system
LogicBlox extends the basic algorithm described here in several useful
ways.  We sketch these extensions here, as they provide basic functionality
essential for implementors.

The LogicBlox runtime evaluates rules defined using the following
fragment of first-order logic:
\begin{align*}
\mathsf{conj} &::= ~[~ \exists \overline{x} ~.~ ~]~ \mathsf{dform} ~(~\wedge \mathsf{dform} ~)^\ast
~ \\
\mathsf{dform} &::= \mathsf{atom} ~|~ \mathsf{disj} ~|~ \mathsf{negation} \\
\mathsf{atom} &::= R(\overline{y}) ~|~ F[\overline{y}]=\overline{z} \\
\mathsf{disj} &::= \mathsf{conj} ~(~ \vee \mathsf{conj} )^+ \\
\mathsf{negation} &::= \neg \mathsf{conj} \\
\mathsf{rule} &::= \forall \overline{x} ~.~ \mathsf{conj} \rightarrow \mathsf{head} \\
\mathsf{head} &::= \mathsf{atom} ~(~ \wedge \mathsf{atom} )^\ast
\end{align*}
Each conjunction bears an optional existential quantifier block.
Atoms can be either relations or functions, which may represent
either concrete data structures (representing edb functions/relations,
or materialized views),
or primitives such as addition and multiplication.
The use of negation comes with some further restrictions not
captured by the above grammar.
We extend leapfrog triejoin to tackle such rules as follows.
\begin{enumerate}
\item \emph{Disjunctions.}  
A simple variant of the leapfrog algorithm
computes a disjunction of unary relations $A_1(x) \vee \cdots \vee A_k(x)$,
using the standard algorithm for merging sorted sequences presented
by iterators.
We handle a disjunction of $k$-ary formulas
$\varphi_1(\overline{x}) \vee \cdots \vee \varphi_k(\overline{x})$
in the following manner.  Each subformula $\varphi_i(\overline{x})$
is required to have the same free variables.
The extension from disjunction of unary subformulas to $k$-ary subformulas
mostly follows the triejoin algorithm of \refsec{s:triejoinimplementation},
with the exception that the triejoin-open() method selects only
those iterators positioned at the current key for opening at the
next level.  The implementation of disjunction presents 
$\varphi_1(\overline{x}) \vee \cdots \vee \varphi_k(\overline{x})$
as a nonmaterialized view using the trie iterator interface;
since leapfrog triejoin likewise presents conjunctions
as nonmaterialized views, we can permit arbitrary nesting of
conjunctions and disjunctions without materializing intermediate
results or DNF-conversion, in most cases.
\item \emph{Functions.}  We distinguish between relations $R(x,$ $y)$
and functions $F[x]=y$.  For the function $F[x]=y$, where $F$ is
represented by a concrete data structure, the variable
$x$ is said to occur in \emph{key position}, and the variable $y$
in \emph{value position}.  A free variable of a conjunction
is a key if it is a key of any subformula;
every free variable of a disjunction is deemed to be a key.
Only variables occurring in key position 
are considered for the variable ordering.  To handle a query
such as $F[x]=y,G[y]=z$ with a variable ordering $[x,y]$,
we treat it as $F[x]=\alpha,\mathit{I}_\alpha(y),G[y]=z$,
where $\mathit{I}_\alpha(y)$ presents a nonmaterialized view of
the relation $\{ \alpha \}$.
\item \emph{Primitives.}
Primitives are scalar operations such as addition and multiplication.
In a subformula such as $z=y+1$, we deem all variable occurrences
to be value-position.  
In a query such as
$A(x,y),z=y+1$, we handle primitives such as $z=y+1$
by attaching \emph{actions} to the
triejoin which are triggered whenever a specified variable
is bound.  With the key ordering $[x,y]$, the primitive
$z=y+1$ would be triggered whenever $y$ is bound.
We order actions attached to the same variable
so as to respect order-of-operation dependencies.
For a query such as $A(x,y),z=y+1,B(z)$ with variable
ordering $[x,y,z]$, we use the same technique as above, treating it as
$A(x,y),\alpha=y+1,I_\alpha(z),B(z)$.
Actions can either succeed or fail; if they fail,
the leapfrog triejoin algorithm searches for the
next binding of the trigger variable.
\item \emph{Negation.}  We distinguish two cases of negation:
complementation, and scalar negation.  Complementation occurs
when we have a formula of the form
$\varphi_1(\overline{x}$, $\overline{y}),\neg \varphi_2(\overline{x})$,
where each variable in $\overline{x}$ occurs in key-position of
$\varphi_1$.  In this case we handle $\neg \varphi_2(\overline{x})$
by an action attached to the last variable of $\overline{x}$,
which performs a lookup into $\varphi_2(\overline{x})$,
succeeding just when $\varphi_2(\overline{x})$ fails.
Scalar negation occurs when we have a
subformula $\neg \varphi_2(\overline{x})$, where each
variable in $\overline{x}$ occurs in value-position in
$\varphi_2$; this implies $\varphi_2$ contains only
primitive operations.  In such cases we permit existential
quantifiers to occur in $\varphi_2$, to handle subexpressions
such as $\neg \exists t ~.~ x+y=t,t > 0$, which depart from
$\exists_1$ in a trivial way.  We handle these by an action attached
to the last variable of $\overline{x}$ which computes
$\varphi_2(\overline{x})$ and succeeds just when
$\varphi_2$ fails.
\item \emph{Projections.}  Projections are currently handled
by using data structures which support reference cou\-nts.  
we anticipate introducing an optimization to handle a projection
$\exists z ~.~ \varphi(\overline{x},z)$ by a special
nonmaterialized view which produces only the first
$z$ for given $\overline{x}$.  We anticipate this will
be more efficient when $z$ occurs after $\overline{x}$
in the variable ordering.
\item \emph{Ranges.}  We handle inequalities such as $x \geq c$
by including a nonmaterialized view of a predicate representing
an interval $[c,+\infty)$; similarly for $\leq, <, >$.  It is a
simple exercise to implement a trie-iterator for an interval
such as $[c,+\infty)$.
\end{enumerate}

The complexity analysis presented for leapfrog triejoin
(\refthm{thm:lftj}) does not immediately encompass the
above extensions, but can be applied in some cases
by considering nested subformulas to be materialized, even though
in actual evaluation they are not.  For example,
in the formula $A(x,y),(B(y,z) \vee C(y,z))$,
we can consider a hypothetical materialization
of $T(y,z) \equiv B(y,z) \vee C(y,z)$, and
analyze $A(x,y),T(y,z)$ using whatever properties
for $T$ we can establish.  For instance, if
we know $|A| \leq n_1$, $|B| \leq n_2$ and $|C| \leq n_3$,
it follows that $|T| \leq (n_2+n_3)$,
and we can invoke \refcor{cor:agmlftj} using
the fractional cover bound.  This technique yields a
valid bound for the nonmaterialized presentation of $T$
if trie iterator operations on the presentation of
$B(y,z) \vee C(y,z)$ take $O(\log n)$ time,
which is the case for disjunctions.  However, this is
not the case for projections.  Laurent Oget made the
promising suggestion of lazily materializing 
subformulas as they are evaluated, which would
limit the cost to that of materializing
all subexpressions.

\subsection*{Conclusions}

Leapfrog triejoin is a variable-oriented join algorithm
which achieves worst-case optimality (up to a log factor)
over large and useful families of problem instances.
It provides the core evaluation algorithms of the LogicBlox
Datalog system.
It improves on the NPRR algorithm in its simplicity,
and its optimality for finer-grained families of problem instances.
The algorithm is easily understood and straightforward
to implement.

\subsection*{Acknowledgements}

Our thanks to Dan Olteanu, Todd J. Green,
Kenneth Ross, Daniel Zinn and Molham Aref for
feedback on drafts of this paper.
Our gratitude to Dung Nguyen, whose
benchmarks comparing leapfrog triejoin to
the NPRR algorithm motivated this work.

\bibliography{bibliography}
\bibliographystyle{plain}

\end{document}